\declaretheorem[name=Theorem]{theorem} % LiPiCS overrides
\declaretheorem[name=Lemma,numberwithin=section]{lemma}
\declaretheorem[name=Corollary,sibling=lemma]{corollary}
\newcommand{\ang}[1]{\left\langle #1 \right\rangle}
\newcommand{\RE}{\mathbb{R}}            % real space
\newcommand{\eps}{\varepsilon}          % my preferred epsilon
\newcommand{\ST}{\,:\,}                 % { x \ST y }
\newcommand{\inv}[1]{\frac{1}{#1}}
\newcommand{\inner}[2]{\ang{#1,#2}} % inner product
\newcommand{\polar}[1]{#1^{\circ}}
\newcommand{\bd}{\partial}
\newcommand{\etal}{\textit{et al.}}
\newcommand{\SP}{\kern+1pt}
\newcommand{\orthproj}[2]{{#1}\mathbin{\big|}{#2}} % orthogonal projection
\newcommand{\orthprojsub}[2]{{#1}\mathbin{|}{#2}} % orthogonal projection
\DeclareMathOperator{\interior}{int}
\DeclareMathOperator{\proj}{proj}
\DeclareMathOperator{\dproj}{dproj}
\DeclareMathOperator{\conv}{conv}
\DeclareMathOperator{\cone}{cone}
\DeclareMathOperator{\vol}{vol}
\DeclareMathOperator{\area}{area}
\DeclareMathOperator{\dist}{dist}
\DeclareMathOperator{\finsler}{Finsler}
\NewDocumentCommand{\volHilb}{O{}}{\vol^{H}_{#1}} % Holmes-Thompson Hilbert volume
\NewDocumentCommand{\volFunk}{O{}}{\vol^{F}_{#1}} % Holmes-Thompson Funk volume
\NewDocumentCommand{\volX}{O{}}{\vol_{#1}} % generic volume
\NewDocumentCommand{\areaHilb}{O{}}{\area^{H}_{#1}} % Hilbert surface area
\NewDocumentCommand{\areaFunk}{O{}}{\area^{F}_{#1}} % Funk surface area
\NewDocumentCommand{\areaMink}{O{}}{\area^{M}_{#1}} % Minkowski surface area
\NewDocumentCommand{\areaX}{O{}}{\area_{#1}} % generic area
\NewDocumentCommand{\distHilb}{O{}}{\dist^{H}_{#1}} % Hilbert distance
\NewDocumentCommand{\distFunk}{O{}}{\dist^{F}_{#1}} % Funk distance
\NewDocumentCommand{\distX}{O{}}{\dist_{#1}} % generic distance
\NewDocumentCommand{\finsHilb}{O{}}{\finsler^{H}_{#1}} % Finsler Hilbert distance
\NewDocumentCommand{\finsFunk}{O{}}{\finsler^{F}_{#1}} % Finsler Funk distance
\NewDocumentCommand{\polarIn}{O{}m}{{#2}^{\circ_{#1}}} % polar in a specified space
\NewDocumentCommand{\ballX}{O{}}{B_{#1}} % generic ball
\NewDocumentCommand{\ballHilb}{O{}}{B^{H}_{#1}} % ball in Hilbert
\NewDocumentCommand{\ballFunk}{O{}}{B^{F}_{#1}} % ball in Funk
\NewDocumentCommand{\cauchyHilb}{O{}}{\proj^{H}_{#1}} % Holmes-Thompson projection Hilbert
\NewDocumentCommand{\cauchyFunk}{O{}}{\proj^{F}_{#1}} % Holmes-Thompson projection Funk
\NewDocumentCommand{\dCauchyFunk}{O{}}{\dproj^{F}_{#1}} % directional Holmes-Thompson projection
\title{Cauchy's Surface Area Formula in the Funk Geometry}
\author{%
	Sunil Arya\thanks{Research supported by the Research Grants Council of Hong Kong, China under project number 16214721.}\\
		Department of Computer Science and Engineering \\
		Hong Kong University of Science and Technology \\
		Clear Water Bay, Kowloon, Hong Kong\\
		arya@cse.ust.hk \\
		\and
	David M. Mount\\
		Department of Computer Science and \\
		Institute for Advanced Computer Studies \\
		University of Maryland \\
		College Park, Maryland 20742 \\
		mount@umd.edu
}
\begin{document}
\date{~}
\maketitle

%-----------------------------------------------------------------------
\begin{abstract}
Cauchy's surface area formula expresses the surface area of a convex body as the average area of its orthogonal projections over all directions. While this tool is fundamental in Euclidean geometry, with applications ranging from geometric tomography to approximation theory, extensions to non-Euclidean settings remain less explored. In this paper, we establish an analog of Cauchy's formula for the Funk geometry induced by a convex body $K$ in $\mathbb{R}^d$, for the Holmes--Thompson surface area. The formula is based on central projections to boundary points of $K$. We show that when $K$ is a convex polytope, the formula reduces to a weighted sum of contributions associated with the vertices of $K$. Finally, as a consequence of our analysis, we derive a generalization of Crofton's formula for surface areas in the Funk geometry. By viewing Euclidean, Minkowski, Hilbert, and hyperbolic geometries as limiting or special cases of the Funk setting, our results provide a unified framework for these classical surface area formulas.
\end{abstract}
%-----------------------------------------------------------------------

\textbf{Keywords:} Convexity, Cauchy's formula, Funk geometry, Hilbert geometry, Crofton's formula, Holmes-Thompson surface area.

%=======================================================================
\section{Introduction} \label{s:intro}
%=======================================================================

Cauchy's surface area formula is a classical identity in Euclidean convex geometry. It states that the surface area of a convex body is a constant multiple of the average $(d-1)$-dimensional volume of its orthogonal projections, or ``shadows'', over all directions. To state the result formally, assume that $d \geq 2$. Let $S^{d-1}$ denote the Euclidean unit sphere, let $\sigma$ denote its rotation-invariant surface measure, let $\lambda_k$ denote $k$-dimensional Hausdorff measure, and let $\omega_d$ denote the volume of the $d$-dimensional Euclidean unit ball. For a convex body $K \subset \RE^d$ and $u \in S^{d-1}$, let $\big(\orthproj{K}{u^\perp}\big)$ denote the orthogonal projection of $K$ onto the hyperplane $u^\perp$. Cauchy's formula asserts that 
\begin{equation} \label{eq:cauchy-euclidean}
    \lambda_{d-1}(\partial K)
     ~=~
    \frac{1}{\omega_{d-1}} \int_{S^{d-1}} \lambda_{d-1} \big(\orthproj{K}{u^{\perp}}\big) \, d\sigma(u).
\end{equation}
The related \emph{Crofton formula} expresses the surface area of $K$ in terms of the average number of intersections of $\partial K$ with lines, taken with respect to a suitable measure on the space of lines.

Cauchy's formula is computationally fundamental, as it underpins algorithms in applications where estimating global geometric properties from lower-dimensional measurements is a recurring task, including geometric tomography~\cite{Gar06}, stereology~\cite{BaJ05}, and surface area estimation for digitized 3D objects~\cite{LYZ10, LWM03}. Integral formulas of this form are the essential first steps in efficient sampling processes, such as computing $\eps$-nets \cite{AHV04, Cla06}, since they implicitly define the probability distribution upon which to base the sampling.

Modern applications have increasingly explored non-Euclidean spaces, such as the \emph{Hilbert geometry} and its close relative, the \emph{Funk geometry}. (See Section~\ref{s:prelim} for definitions.) These geometries arise naturally whenever the domain of interest is a convex set. Examples include the analysis of discrete probability distributions in machine learning, where the domain is the probability simplex \cite{BNN10, NiS19}, analysis of networks through hyperbolic geometry~\cite{KPK10}, analysis of positive definite matrices in deep learning~\cite{ISY11, LDL23}, and lattice-based cryptosystems~\cite{AFM24}. Hilbert is more widely studied than Funk, but Faifman argues that for many applications, including computing volumes and areas, the Funk geometry is more natural and yields cleaner results~\cite{Fai24}. 

While the metric properties of these spaces are well-understood in differential geometry~\cite{Fai24, PaT09, PaT14}, computational aspects of these geometries are only beginning to emerge. Recent work has employed the Hilbert geometry for polytope approximation~\cite{AAFM22, AFM24}, clustering~\cite{NiS19}, and approximate membership queries~\cite{AbM24}. Despite this growing algorithmic interest, the integral geometry of these spaces lacks the computational primitives required for efficient implementations. A major barrier is that the standard definition of surface area (the Holmes-Thompson area~\cite{HoT79}) relies on symplectic forms or global integrals involving the polar body, making direct numerical evaluation prohibitively complex for high-dimensional or large-scale settings.

In this paper, we bridge this gap by establishing explicit Cauchy and Crofton formulas for the Funk geometry. Our formulas transform the abstract definitions of Finsler area into simple, concrete geometric quantities, which are readily accessible to discrete computation. Beyond the specific interest in Funk geometry, our results reveal it as a unifying bridge connecting diverse geometric settings. 

Our main result is a Funk analog of Cauchy's formula. Let $K$ be a convex body in $\RE^d$, and let $G$ be a convex set contained in the interior of $K$. Our objective is to compute the Holmes--Thompson surface area of $G$ in the Funk geometry induced by $K$, which we denote by $\areaFunk[K](G)$. For simplicity, let us assume for now that $K$'s boundary is strictly convex, implying that for each direction $u \in S^{d-1}$, there is a unique boundary point $v_K(u)$ whose supporting hyperplane is orthogonal to $u$ (see Figure~\ref{f:funk-shadows-1}(a)). Define the \emph{central shadow} $S_K(G, u)$ as the slice of the cone subtended by $G$ at $v_K(u)$, orthogonal to the direction $u$:
\[
    S_K(G,u) ~ = ~ -u^* \cap \cone(G - v_K(u)),
\]
where $-u^*$ is the hyperplane tangent to $S^{d-1}$ at $-u$ and $\cone(G - v_K(u))$ is the translation of the subtended cone to the origin (see Figure~\ref{f:funk-shadows-1}(b)). The measure of this section captures the ``visual size'' of $G$ as seen from the boundary point $v_K(u)$. Just as the Euclidean formula averages orthogonal shadows, our Funk formula averages these central shadows. We show that $G$'s surface area is proportional to the average area of these shadows, taken over all directions $u$.

%-----------------------------------------------------------------------
\begin{figure}[htbp]
  \centerline{\includegraphics[scale=0.40,page=1]{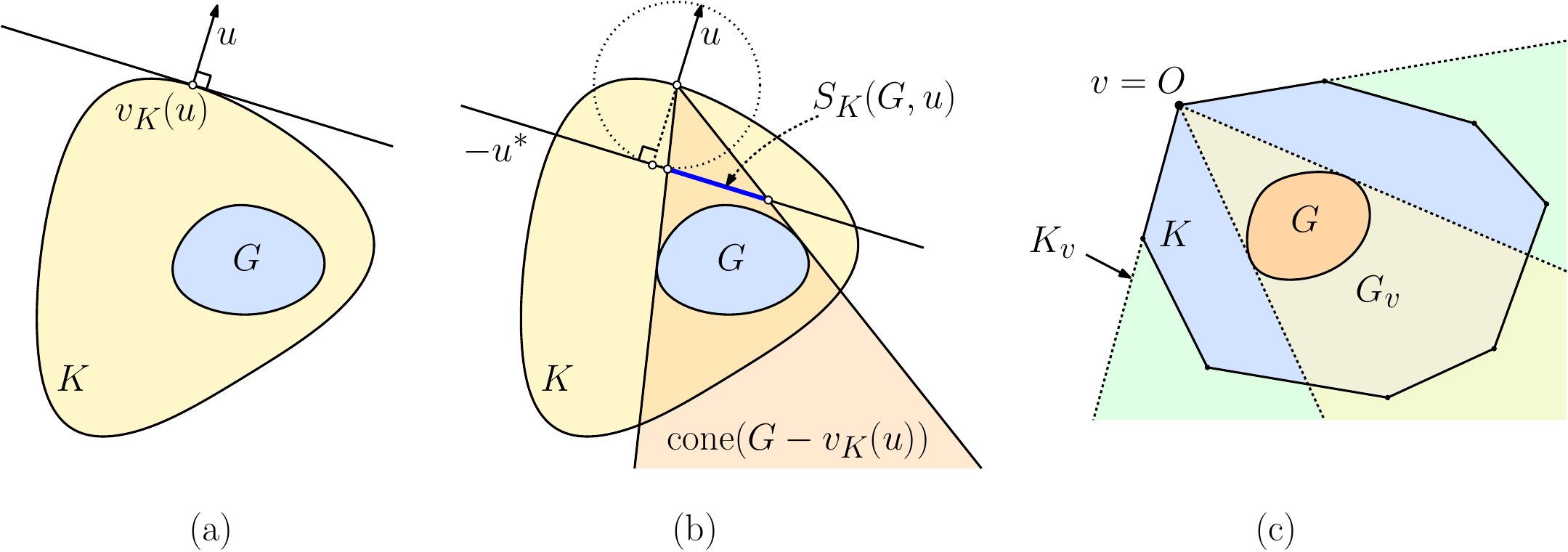}}
  \caption{(a) The boundary point $v_K(u)$, (b) the central shadow $S_K(G,u)$ (translated so that $v_K(u)$ coincides with the origin), and (c) vertex decomposition.}
  \label{f:funk-shadows-1}
\end{figure}
%-----------------------------------------------------------------------

%-----------------------------------------------------------------------
\begin{theorem} \label{thm:cauchy-funk-gen}
Let $G$ and $K$ be two convex bodies in $\RE^d$ such that $G \subset \interior(K)$. Then
\[
    \areaFunk[K](G) 
        ~ = ~ \frac{1}{\omega_{d-1}} \int_{S^{d-1}} \lambda_{d-1}(S_K(G,u)) \, d\sigma(u).
\]
\end{theorem}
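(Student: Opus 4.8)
\emph{Proof plan.}
The plan is to reduce the stated equality to a single pointwise identity on $\bd G$ and then to recognize that identity as a repackaging of classical Euclidean polar duality. Let $n$ denote the outer unit normal of $\bd G$, defined for $\mathcal{H}^{d-1}$-almost every $p\in\bd G$ ($\mathcal{H}^{d-1}$ being $(d-1)$-dimensional Hausdorff, i.e.\ surface, measure). By the definition of the Holmes--Thompson area, $\areaFunk[K](G)$ is the integral over $\bd G$ of the density determined by the Funk norm restricted to the tangent hyperplane $n^{\perp}$, whose unit ball at $p$ is $(K-p)\cap n^{\perp}$; that is,
\[
  \areaFunk[K](G) ~=~ \frac{1}{\omega_{d-1}}\int_{\bd G}\lambda_{d-1}\!\big(((K-p)\cap n^{\perp})^{\circ}\big)\,d\mathcal{H}^{d-1}(p),
\]
with the polar taken within $n^{\perp}$. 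On the other side, I would rewrite each central shadow as a surface integral over $\bd G$. Since $G$ is convex and $v_K(u)\notin G$, the central projection $\pi_u$ of $\bd G$ from $v_K(u)$ onto the hyperplane $-u^{*}$ is Lipschitz and covers $S_K(G,u)$ with multiplicity $2$ almost everywhere, so by the area formula $\lambda_{d-1}(S_K(G,u)) = \tfrac12\int_{\bd G}J_u(p)\,d\mathcal{H}^{d-1}(p)$, where $J_u$ is the absolute Jacobian of $\pi_u$. Factoring $\pi_u$ through the radial projection onto the unit sphere about $v_K(u)$, followed by the central projection, from the same center, of that sphere onto $-u^{*}$, yields
\[
  J_u(p) ~=~ \frac{|\inner{w}{n}|}{\|p-v_K(u)\|^{\,d-1}\,|\inner{w}{u}|^{d}},\qquad w \,=\, \frac{p-v_K(u)}{\|p-v_K(u)\|}.
\]

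Next I would substitute this into the right-hand side of the theorem and interchange the order of integration by Fubini (everything is bounded since $G$ lies compactly in $\interior(K)$); it then suffices to prove, for a.e.\ $p\in\bd G$ with outer normal $n$, the pointwise identity $\tfrac12\int_{S^{d-1}}J_u(p)\,d\sigma(u) = \lambda_{d-1}\big(((K-p)\cap n^{\perp})^{\circ}\big)$. Here a convenient cancellation occurs: after translating so that $p$ is the origin (hence $0\in\interior(K)$), every power of $\|p-v_K(u)\|=\|v_K(u)\|$ in $J_u(p)$ cancels, leaving the integrand equal to $|\inner{v_K(u)}{n}|\,/\,h_K(u)^{d}$, where $h_K$ is the support function. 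The theorem is thus reduced to the purely Euclidean assertion
\[
  \frac12\int_{S^{d-1}}\frac{|\inner{v_K(u)}{n}|}{h_K(u)^{d}}\,d\sigma(u) ~=~ \lambda_{d-1}\!\big((K\cap n^{\perp})^{\circ}\big),\qquad 0\in\interior(K),
\]
the polar again within $n^{\perp}$.

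To prove this I would combine two standard facts of convexity. First, the polar of a central section equals the orthogonal projection of the polar body, so the right-hand side is $\lambda_{d-1}(\orthproj{\polar K}{n^{\perp}})$. Second, for any convex body $L$ one has $\lambda_{d-1}(\orthproj{L}{n^{\perp}}) = \tfrac12\int_{\bd L}|\inner{n}{u_L(y)}|\,d\mathcal{H}^{d-1}(y)$, with $u_L$ the Gauss map of $L$. Applying the second fact with $L=\polar K$ and changing variables by the dual Gauss map $\Phi\colon\bd K\to\bd\polar K$, $\Phi(v)=u_K(v)/h_K(u_K(v))$ — which satisfies $u_{\polar K}(\Phi(v))=v/\|v\|$ and, being the composition of the Gauss map of $\bd K$ with the radial parametrization of $\bd\polar K$, has Jacobian $\|v\|\,\kappa_K(v)\,h_K(u_K(v))^{-d}$ (with $\kappa_K$ the Gauss curvature) — rewrites $\lambda_{d-1}(\orthproj{\polar K}{n^{\perp}})$ as $\tfrac12\int_{\bd K}|\inner{v}{n}|\,\kappa_K(v)\,h_K(u_K(v))^{-d}\,d\mathcal{H}^{d-1}(v)$. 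Since $v_K$ is the inverse of $u_K$, the substitution $u=u_K(v)$ — under which $d\sigma(u)=\kappa_K(v)\,d\mathcal{H}^{d-1}(v)$ — carries the left-hand side of the last display into precisely the same integral, which settles the identity when $K$ and $G$ are smooth and strictly convex. The general case I would obtain by approximating $K$ from outside by smooth strictly convex bodies $K_j\to K$ with $G\subset\interior(K_j)$ for large $j$, and letting $j\to\infty$: both $\areaFunk[K_j](G)$ and $\int_{S^{d-1}}\lambda_{d-1}(S_{K_j}(G,u))\,d\sigma(u)$ are continuous under Hausdorff convergence of $K$, because $v_{K_j}(u)\to v_K(u)$ for a.e.\ $u$ while all the integrands stay uniformly bounded.

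I expect the crux to be the last displayed identity — in effect the statement that averaging the central shadows of $K$ itself over all directions reconstructs the $(d-1)$-volume of a polar section of $K$ — together with the clean evaluation of the Jacobian of the dual Gauss map; once these are in hand, everything else is routine change of variables. A secondary technical point is the careful justification of the ``multiplicity $2$'' claim for $\pi_u$ near the silhouette directions and of the application of Fubini's theorem, which is exactly where the hypothesis $G\subset\interior(K)$ is used.
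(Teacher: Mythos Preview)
Your argument is correct, but the route is genuinely different from the paper's. The paper first proves the polytope case via the vertex decomposition (Theorem~\ref{thm:cauchy-funk-polytope}), expressing $\areaFunk[K](G)$ as a sum of cone Funk volumes $\volFunk[K_v](G_v)$, then rewrites each cone volume as a shadow integral over the normal cone $U_v$ (Lemma~\ref{lem:funk-shadows}), sums these over $S^{d-1}$, and finally approximates a general $K$ by polytopes. You instead swap the order of integration at the outset, reducing the theorem to a single pointwise identity on $\bd G$ --- namely $\tfrac12\int_{S^{d-1}}|\inner{v_K(u)}{n}|\,h_K(u)^{-d}\,d\sigma(u)=\lambda_{d-1}\big(\orthproj{\polar{K}}{n^\perp}\big)$ --- and prove that identity by the dual Gauss map change of variables, assuming $K$ is $C^2_+$ and approximating afterwards. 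Your route is more direct for Theorem~\ref{thm:cauchy-funk-gen} alone, and the pointwise identity you isolate is an attractive ``dual Cauchy formula'' in its own right; the paper's detour through polytopes is longer but yields the vertex decomposition as an independent structural result with algorithmic consequences. One minor point: your argument does not actually use smoothness of $G$ anywhere (the outer normal exists $\mathcal{H}^{d-1}$-a.e.\ on $\bd G$ and the multiplicity-two and Fubini steps need only convexity of $G$ and $G\subset\interior(K)$), so the approximation at the end need only be in $K$, as you in fact do.
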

%-----------------------------------------------------------------------

This provides a ``tomographic interpretation'' in the sense that the intrinsic Funk surface area can be recovered solely from these central slices, which capture the visibility of $G$ from the boundary of $K$. It is interesting to note that, while the Holmes--Thompson surface area of $G$ (presented in Section~\ref{s:ht-vol-area}) is defined in terms of the areas of the polars of Finsler balls in the Funk geometry induced by $K$, the formula of Theorem~\ref{thm:cauchy-funk-gen} involves only simple Euclidean quantities and the standard Lebesgue measure.

Our formulation connects the Funk geometry to several classical settings. When $K$ is expanded to infinity, the central shadows become parallel projections, and our result converges to a Cauchy-type formula for the surface area in Minkowski spaces (Lemma~\ref{lem:minkowski-cauchy}). In the special case where $K$ is a Euclidean ball, this limit yields the classical Euclidean Cauchy formula. Furthermore, for a finite Euclidean ball $K$, our formula yields the exact surface area in the Beltrami-Klein model of hyperbolic geometry~\cite{CFK97}, generalizing planar observations by Alexander \etal~\cite{ABF05}. Finally, for arbitrary convex bodies, our Funk surface area serves as a constant-factor approximation for the Hilbert surface area, with factors depending only on the dimension $d$ (Corollary~\ref{cor:hilbert-exact}).

Crucially for algorithmic applications, we derive a \emph{discrete surface area formula} when $K$ is a polytope. We show that the Funk surface area of a body $G$ nested within $K$ can be decomposed into a sum of local terms associated with the \emph{vertices of $K$}. Each vertex $v$ of $K$ is naturally associated with a pointed cone $K_v = \cone(K - v)$, which is the cone subtended by $K$ after translating $v$ to the origin (see Figure~\ref{f:funk-shadows-1}(c)). The body $G$ is similarly associated with a cone $G_v = \cone(G - v)$. In Section~\ref{s:cones}, we define a simple shadow-based notion of the Funk volume of $G_v$ with respect to $K_v$, which we denote by $\volFunk[K_v](G_v)$. We show that the total surface area of $G$ can be expressed as the sum of these cone-based Funk volumes over the vertices of $K$.

%-----------------------------------------------------------------------
\begin{theorem}[Vertex Decomposition for Polytopes] \label{thm:cauchy-funk-polytope}
Let $K$ be a convex polytope in $\RE^d$, and let $G \subset \interior(K)$ be a convex body. For each vertex $v \in V(K)$, let $K_v = \cone(K - v)$ and $G_v = \cone(G - v)$. Then
\[
    \areaFunk[K](G)
        ~ = ~ \sum_{v \in V(K)} \volFunk[K_v](G_v).
\]
\end{theorem}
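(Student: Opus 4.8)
The plan is to obtain Theorem~\ref{thm:cauchy-funk-polytope} directly from the general Cauchy formula of Theorem~\ref{thm:cauchy-funk-gen} by partitioning the sphere of directions according to the normal fan of $K$. For a polytope, the supporting hyperplane with outer normal $u$ meets $K$ in the unique face whose normal cone contains $u$ in its relative interior; in particular, whenever $u$ lies in the interior of the normal cone $N_K(v)$ of a vertex $v$, that face is $v$ itself, so $v_K(u) = v$. Since $K$ is a convex body, the cones $N_K(v)$, $v \in V(K)$, are the full-dimensional cones of the normal fan: they have pairwise disjoint interiors and cover $\RE^d$, while the normal cones of faces of dimension at least $1$ lie on their common boundaries and hence meet $S^{d-1}$ in a $\sigma$-null set. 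Thus $u \mapsto v_K(u)$ is defined for $\sigma$-almost every $u \in S^{d-1}$ and equals $v$ exactly on $\interior(N_K(v)) \cap S^{d-1}$.

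First I would apply Theorem~\ref{thm:cauchy-funk-gen} to write $\areaFunk[K](G) = \frac{1}{\omega_{d-1}} \int_{S^{d-1}} \lambda_{d-1}(S_K(G,u)) \, d\sigma(u)$ and split the integral over the pieces $\interior(N_K(v)) \cap S^{d-1}$. On such a piece $v_K(u) = v$, so $S_K(G,u) = {-u^*} \cap \cone(G-v) = {-u^*} \cap G_v$, a quantity that depends on $K$ only through the single vertex $v$. This gives
\[
    \areaFunk[K](G) ~=~ \sum_{v \in V(K)} \frac{1}{\omega_{d-1}} \int_{\interior(N_K(v)) \cap S^{d-1}} \lambda_{d-1}\big({-u^*} \cap G_v\big) \, d\sigma(u).
\]
To finish, I would identify each summand with $\volFunk[K_v](G_v)$: the normal cone $N_K(v)$ is precisely the polar cone $K_v^{\circ}$ of $K_v = \cone(K-v)$, i.e.\ the set of outer normals of supporting hyperplanes of $K_v$ at its apex, so the integral on the right is exactly the shadow-based (cone) Funk volume of $G_v$ in $K_v$ as defined in Section~\ref{s:cones}. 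Summing over $v \in V(K)$ yields the claim.

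Most of this is bookkeeping: describing the normal fan and verifying that lower-dimensional faces contribute nothing. The step I expect to require genuine care is the last one --- matching the quantity coming out of Theorem~\ref{thm:cauchy-funk-gen} with the definition of $\volFunk[K_v](G_v)$ from Section~\ref{s:cones}, including the bookkeeping of the normalizing constant $1/\omega_{d-1}$ and the identification of the integration domain with the polar cone $K_v^{\circ}$. A secondary point to confirm is that Theorem~\ref{thm:cauchy-funk-gen} is available for a polytope $K$, whose boundary is not strictly convex; should its proof be stated only under a strict-convexity or smoothness hypothesis, one would first approximate $K$ from outside by a decreasing sequence of strictly convex bodies and pass to the limit, invoking continuity of $\areaFunk[\cdot](G)$ and of the central shadows with respect to $K$.
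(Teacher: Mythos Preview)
Your argument is logically circular within the paper's structure. In the paper, Theorem~\ref{thm:cauchy-funk-gen} is proved \emph{after} Theorem~\ref{thm:cauchy-funk-polytope}, and its proof (Step~1) explicitly invokes Theorem~\ref{thm:cauchy-funk-polytope} to handle the polytopal case, then passes to general bodies by polytopal approximation. So you cannot appeal to Theorem~\ref{thm:cauchy-funk-gen} to establish Theorem~\ref{thm:cauchy-funk-polytope}. Your fallback of approximating the polytope $K$ by strictly convex bodies does not help: the paper gives no independent proof of Theorem~\ref{thm:cauchy-funk-gen} for smooth or strictly convex $K$; the only route to it goes through the vertex decomposition.

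There is also a mismatch in your ``genuine care'' step. The cone Funk volume in Section~\ref{s:cones} is \emph{not} defined as the integral you wrote down. The definition (Eq.~\eqref{eq:funk-vol-cone-def}) integrates $\lambda_{d-1}(F_{K_v}(s))$ over $s \in \Sigma(G_v) = S^{d-1} \cap G_v$, whereas your integral runs over $u \in \polar{K_v} \cap S^{d-1}$ with integrand $\lambda_{d-1}(-u^* \cap G_v)$. That these two integrals coincide is precisely the content of Lemma~\ref{lem:funk-shadows}, which is a genuine Fubini/gnomonic duality, not a definitional identity. By contrast, the paper's proof of Theorem~\ref{thm:cauchy-funk-polytope} avoids all of this: it works directly from the Holmes--Thompson surface-area density (Eq.~\eqref{eq:funk-area-element}), uses Lemma~\ref{lem:cone-boundary-integral} to rewrite each $\volFunk[K_v](G_v)$ as a boundary integral over $\bd G$, and then applies the facet decomposition of $\polar{(K-x)}$ (Lemma~\ref{lem:facets-polar}) to sum the contributions.
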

%-----------------------------------------------------------------------

While the cone-based Funk volume defined in Section~\ref{s:cones} involves integration, it can be readily estimated through random sampling. Furthermore, this decomposition scales linearly with the vertex count of $K$ (see the discussion following the proof of Theorem~\ref{thm:cauchy-funk-gen} in Section~\ref{s:cauchy-gen}). This is in contrast to prior Crofton measures for Hilbert geometries, which involve quadratic enumerations over face pairs~\cite{Sch06a}.

Finally, we establish a \emph{Funk--Crofton formula} (Theorem~\ref{thm:crofton-unoriented}), expressing the Funk surface area as the $K$-dependent measure of the set of line parameters whose associated lines intersect the body. This result provides the basis for Monte-Carlo estimation of surface area, similar to methods used in Euclidean geometry~\cite{ACF22, LWM03}. That is, one can approximate the Funk surface area simply by sampling from the associated parameter distribution and counting intersections of the corresponding lines with the body, bypassing the need for complex analytical evaluation. 

%=======================================================================
\subsection{Related Work}
%=======================================================================

The study of integral geometry in projective Finsler spaces has a rich history in differential geometry~\cite{AlF98, Sch01, Sch06b}. Schneider~\cite{Sch01} established the existence of Crofton measures for general projective Finsler spaces using the dual Holmes--Thompson volume. For the specific case of \emph{polytopal Hilbert geometries}, he derived an explicit Crofton measure involving a combinatorial decomposition over pairs of complementary faces of the polytope~\cite{Sch06a}. Our work differs by focusing on Funk geometry and providing a decomposition based on the \emph{vertices} of the ambient polytope, which, as noted above, offers significant computational advantages.

In the planar setting, Alexander~\cite{Ale78} and later Alexander, Berg, and Foote~\cite{ABF05} derived elegant Cauchy-type perimeter formulas for the Hilbert geometry using trigonometric integrals. Note that in this case, Hilbert and Funk surface areas coincide. Like ours, their formulas are explicit and admit a shadow interpretation in the Beltrami-Klein disk model of hyperbolic geometry. However, their techniques are specialized to the plane and do not seem to extend to arbitrary convex domains in higher dimensions. In contrast, our approach unifies these perspectives by showing that the ``average of shadows'' principle holds for general convex bodies in any dimension, with integrands that are directly interpretable as measures of central projections.

The remainder of the paper is organized as follows. In Section~\ref{s:prelim}, we present the mathematical background and notation needed for our analysis, including the Funk and Hilbert geometries, the Holmes--Thompson notions of volume and surface area, and their projective properties. In Section~\ref{s:vertex-decomp-ptope}, we derive the vertex decomposition formula for polytopes (Theorem~\ref{thm:cauchy-funk-polytope}). In Section~\ref{s:cauchy-gen}, we prove the Cauchy formula for arbitrary convex bodies (Theorem~\ref{thm:cauchy-funk-gen}) and derive the associated Funk--Crofton formula. Finally, in Section~\ref{s:other-geometries}, we discuss connections with other geometries, showing that our results yield exact surface area formulas for planar Hilbert geometries and hyperbolic space, as well as the exact Minkowski--Cauchy formula. In an appendix, we give an elementary proof of the duality of the Funk surface area, based on the vertex decomposition formula.

%=======================================================================
\section{Preliminaries} \label{s:prelim}
%=======================================================================

Let us begin by presenting notation and terminology that will be used throughout the paper. We use $\inner{\cdot}{\cdot}$ for the standard inner (dot) product on $\RE^d$, and $\|\cdot\| = \sqrt{\inner{\cdot}{\cdot}}$ for the Euclidean norm. Let $O$ denote the origin, let $B^d$ be the Euclidean unit ball centered at $O$, and let $S^{d-1} = \bd B^d$ be the unit sphere. Given a linear subspace $E \subset \RE^d$ and a set $K \subset \RE^d$, let $\orthproj{K}{E}$ denote the orthogonal projection of $K$ onto $E$.

A \emph{convex body} in $\RE^d$ is a compact convex set with nonempty interior. Given a convex set $K$, we denote its boundary and interior by $\bd K$ and $\interior(K)$, respectively. For $\alpha \geq 0$, $\alpha K$ denotes the dilation of $K$ by factor $\alpha$ about the origin, and for $x \in \RE^{d}$, $K + x$ denotes the translate of $K$ by $x$. Given a convex body $K$ and a direction $u \in S^{d-1}$, the \emph{support function} of $K$ is  $h_K(u) = \sup \{ \inner{u}{x} : x \in K \}$, and the corresponding supporting hyperplane is 
\[
    H_K(u) ~ = ~ \{ x \in \RE^d : \inner{u}{x} = h_K(u) \}. 
\]

For $1 \leq k \leq d$, let $\lambda_k$ denote the $k$-dimensional Hausdorff measure on $\RE^d$. In particular, $\lambda_d$ is the Lebesgue measure, and on any $C^1$ $k$-dimensional submanifold of $\RE^d$, $\lambda_k$ agrees with the usual $k$-dimensional surface area. Let $\sigma$ denote the standard rotation-invariant surface measure on $S^{d-1}$. Finally, let $\omega_d = \lambda_d(B^d)$. 

Given two $C^1$ hypersurfaces $\Phi,\Psi \subset \RE^d$ and a differentiable map $f:\Phi \to \Psi$, the Jacobian of $f$ at $x \in \Phi$ is the local area-stretch factor induced by the differential $Df(x)$ on the tangent space of $\Phi$ at $x$. Equivalently, it is the factor appearing in the change-of-variables formula for surface integrals (see, for example,~\cite{Car76}). 

Given two convex sets $A, B \subset \RE^d$, their \emph{Minkowski sum} is 
\[
    A + B ~ = ~ \{ a + b \ST a \in A,\ b \in B \}. 
\] 
Given a convex body $K \subset \RE^d$, its \emph{difference body} is 
\[
    \Delta(K) ~ = ~ K + (-K), 
\]
which is centrally symmetric (see Figure~\ref{f:preliminaries}(a)). 

%-----------------------------------------------------------------------
\begin{figure}[htbp]
  \centerline{\includegraphics[scale=0.40]{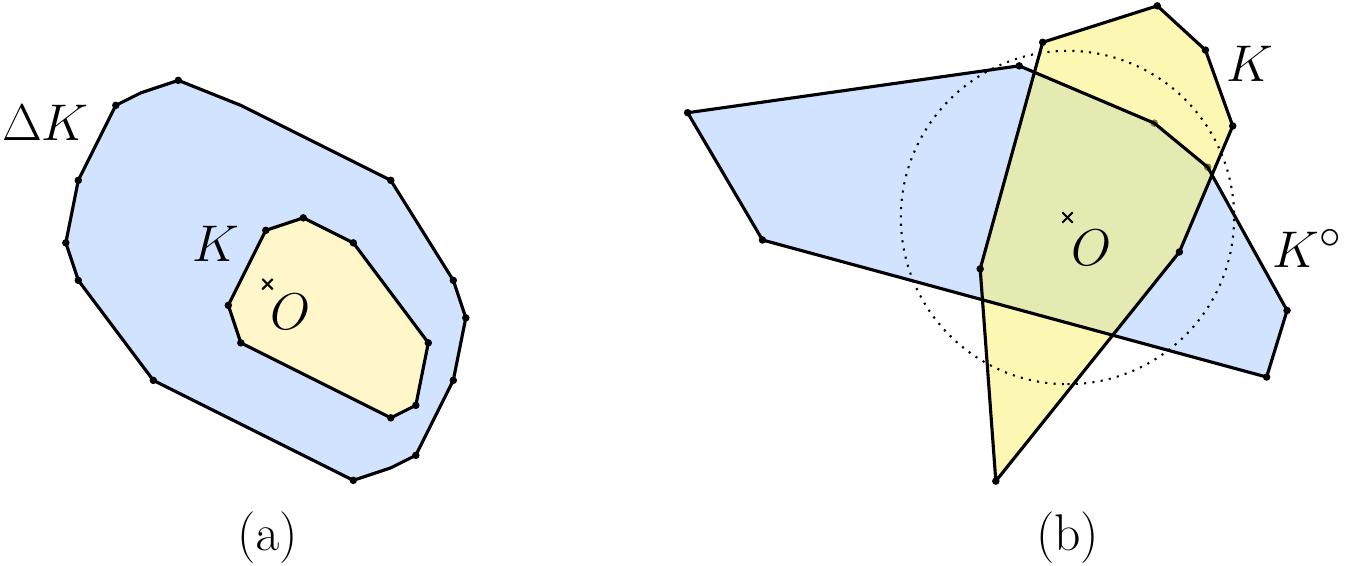}}
  \caption{(a) The difference body of a convex body and (b) the polar of a convex body.}
  \label{f:preliminaries}
\end{figure}
%-----------------------------------------------------------------------

For any nonempty set $K \subset \RE^d$, its \emph{polar}, denoted $\polar{K}$, is defined by 
\[
    \polar{K}
        ~ = ~ \{ y \in \RE^d \ST \inner{x}{y} \leq 1 \text{ for all } x \in K \}
\]
(see Figure~\ref{f:preliminaries}(b)). The polar has several standard properties (see, for example, Barvinok~\cite{Bar02}). It is always closed, convex, and contains the origin. If $O \in \interior(K)$, then $\polar{K}$ is bounded. Moreover, if $K$ is a convex body with $O \in \interior(K)$, then $\polar{K}$ is also a convex body, and $\polar{(\polar{K})} = K$. Finally, polarity reverses inclusion: if $K_1 \subseteq K_2$, then $\polar{K_2} \subseteq \polar{K_1}$. 

For a linear subspace $E \subseteq \RE^d$ and a nonempty set $G \subseteq E$, define the polar of $G$ in $E$ by
\[
    \polarIn[E]{G}
        ~ = ~ \{y\in E : \inner{x}{y} \leq 1, \text{~for all $x\in G$}\}.
\]
The next lemma is the standard duality between sections and projections, stated in the form needed later. A proof may be found in~\cite[Theorem 2.2.9 and Corollary 2.2.10]{Tho96}. 

%-----------------------------------------------------------------------
\begin{lemma}[Projection-Section Duality] \label{lem:slice}
Let $K \subseteq \RE^d$ be a convex body such that $O \in \interior(K)$, and let $E$ be a linear subspace of $\RE^d$. Then the polar in $E$ of the section $K \cap E$ equals the orthogonal projection of $\polar{K}$ onto $E$. That is,
\[
    \polarIn[E]{(K \cap E)} 
        ~ = ~ \orthproj{\polar{K}}{E}.
\]
\end{lemma}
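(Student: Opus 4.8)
The plan is to prove the two inclusions separately. The inclusion $\orthproj{\polar{K}}{E} \subseteq \polarIn[E]{(K \cap E)}$ is immediate from the definitions: given $z$ in the left-hand side, write $z = \orthproj{y}{E}$ with $y \in \polar{K}$, so that $y - z \in E^{\perp}$; then for every $x \in K \cap E \subseteq E$ we have $\inner{x}{y-z} = 0$, hence $\inner{x}{z} = \inner{x}{y} \le 1$, which says exactly $z \in \polarIn[E]{(K \cap E)}$. No convexity is used in this direction.

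The reverse inclusion is the substantive part, and I would prove it by contradiction using separation. Since $O \in \interior(K)$, the polar $\polar{K}$ is compact, so $C := \orthproj{\polar{K}}{E}$ is a nonempty compact convex subset of $E$. Suppose some $z \in \polarIn[E]{(K \cap E)}$ did not lie in $C$. Applying the separating hyperplane theorem inside $E$ yields a direction $u \in E$ with $\inner{z}{u} > h_C(u)$. The key observation is that projecting onto $E$ cannot change an inner product against a vector already in $E$, so $h_C(u) = \sup_{y \in \polar{K}} \inner{\orthproj{y}{E}}{u} = \sup_{y \in \polar{K}} \inner{y}{u} = h_{\polar{K}}(u)$, and the standard duality $h_{\polar{K}} = g_K$ between the support function of the polar and the Minkowski gauge $g_K$ of $K$ (see, e.g., \cite{Bar02}) gives $h_C(u) = g_K(u)$.

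It then remains to derive a contradiction from $\inner{z}{u} > g_K(u)$. Here $u \neq O$ (otherwise the inequality reads $0 > 0$) and $t := g_K(u) > 0$ because $O \in \interior(K)$. The point $x := u/t$ satisfies $g_K(x) = 1$, hence $x \in K$, and $x \in E$ since $u \in E$, so $x \in K \cap E$; but $\inner{x}{z} = \inner{z}{u}/t > g_K(u)/t = 1$, contradicting $z \in \polarIn[E]{(K \cap E)}$. Thus $z \in C$, which proves the lemma. Equivalently, and perhaps more cleanly, one can simply verify that the two closed convex subsets of $E$ have the same support function at every $u \in E$, namely $h_{\polarIn[E]{(K\cap E)}}(u) = g_{K \cap E}(u) = g_K(u) = h_{\orthproj{\polar{K}}{E}}(u)$, where the middle equality holds because positively rescaling a vector of $E$ keeps it in $E$, so the section $K \cap E$ has the same gauge as $K$ along directions of $E$.

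The one point that genuinely requires care — the ``obstacle'' in the sense asked — is recognizing that passing to the subspace $E$ is invisible in two complementary ways: it does not affect inner products with directions in $E$ (this is what identifies the support function of the projection $\orthproj{\polar{K}}{E}$ with that of $\polar{K}$ on $E$), and it does not affect the gauge along directions in $E$ (this is what identifies $g_{K\cap E}$ with $g_K$ on $E$). Once these two facts are combined with the gauge/support-function duality for convex bodies with $O$ in their interior, the argument is essentially forced.
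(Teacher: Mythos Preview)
Your proof is correct. Both the separation argument and the cleaner support-function version at the end are valid; the latter is in fact the more economical packaging of exactly the same two observations you isolate as the ``obstacle.''

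The paper does not give its own proof of this lemma---it simply cites Thompson~\cite{Tho96}. It does, however, prove a slight generalization in the appendix (Lemma~\ref{lem:slice-gen}, allowing $K$ to be an unbounded closed convex set), and the route taken there is genuinely different from yours. Rather than comparing the two sides directly, the paper first establishes the companion identity $\polar{K} \cap E = \polarIn[E]{(\orthproj{K}{E})}$ (Lemma~\ref{lem:slice-aux}), which is a one-line consequence of the fact that support functions are unchanged by orthogonal projection when evaluated on $E$. It then applies this identity with $\polar{K}$ in place of $K$ and invokes the Bipolar Theorem twice (once in $\RE^d$, once in $E$). Your approach uses the gauge/support duality $h_{\polar{K}} = g_K$ explicitly and separates directly; the paper's approach hides this inside the Bipolar Theorem and instead exploits the symmetry between section and projection under polarity. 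Both arguments rest on the same underlying fact (inner products with vectors in $E$ are blind to the $E^\perp$ component), but the paper's version has the mild advantage of making the ``dual'' statement $\polar{K} \cap E = \polarIn[E]{(\orthproj{K}{E})}$ available as a standalone tool, while yours is more self-contained and does not require invoking bipolarity.
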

%-----------------------------------------------------------------------

%-----------------------------------------------------------------------
\subsection{The Funk and Hilbert Geometries} \label{s:funk-hilbert}
%-----------------------------------------------------------------------

In this section, we introduce the Funk and Hilbert geometries and the related Finsler structures used to define volume and surface area. Let $K$ be a convex body in $\RE^d$. For any two distinct points $x, y \in \interior(K)$, let $y'$ denote the point where the ray directed from $x$ through $y$ intersects $\bd K$ (see Figure~\ref{f:funk-hilbert}(a)). The \emph{Funk distance} with respect to $K$, denoted $\distFunk[K](\cdot,\cdot)$, is defined to be
\[
    \distFunk[K](x,y)
       ~ = ~ \ln \frac{\|x - y'\|}{\|y - y'\|},
\]
where $\distFunk[K](x,y) = 0$ if $x = y$. The Funk distance is nonnegative, asymmetric, satisfies the triangle inequality, and is invariant under invertible affine transformations~\cite{PaT14}. Because of its asymmetry, this is often referred to as a weak metric or quasi-metric.

%-----------------------------------------------------------------------
\begin{figure}[htbp]
  \centerline{\includegraphics[scale=0.40,page=1]{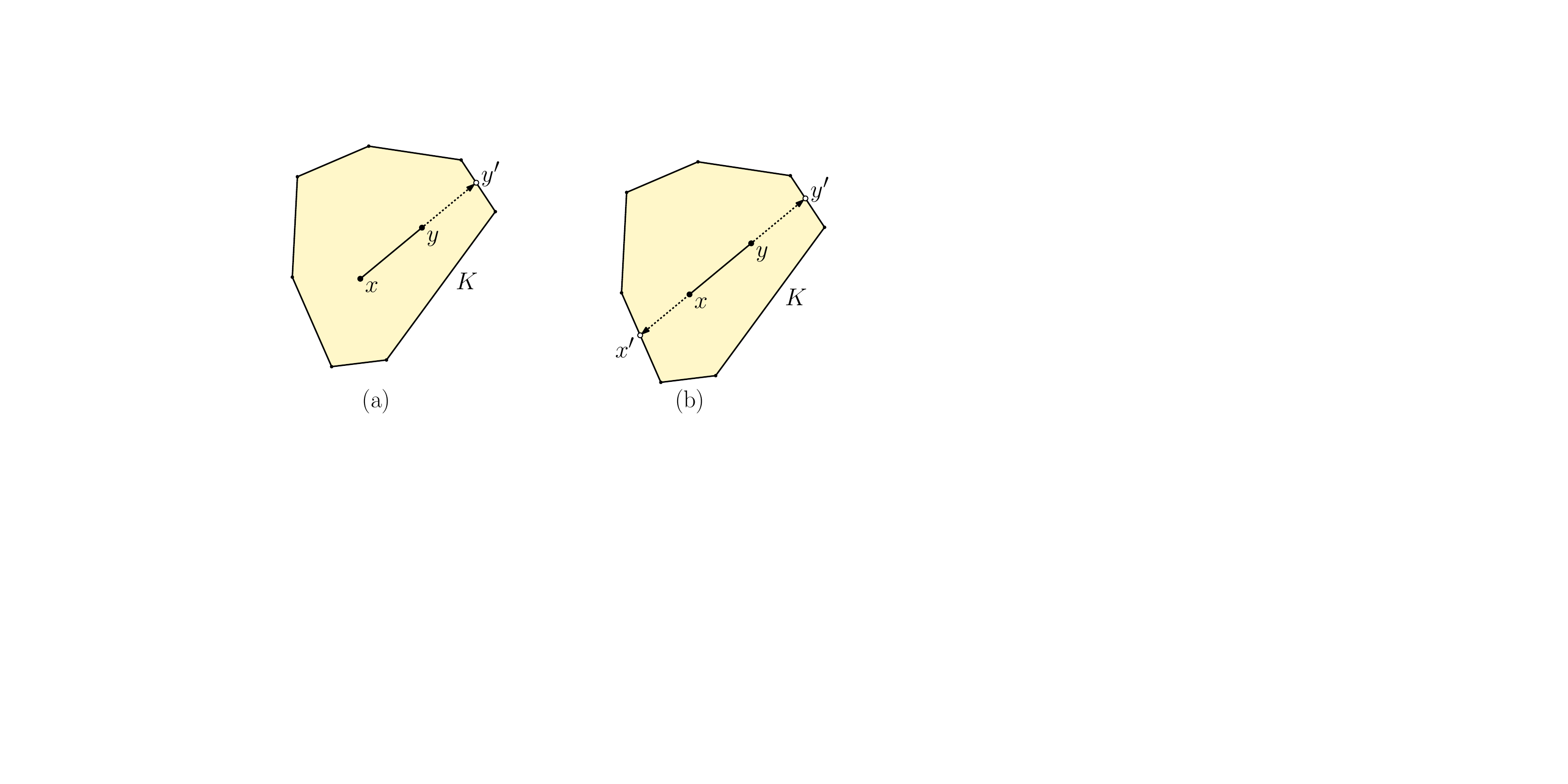}}
  \caption{(a) The Funk distance and (b) the Hilbert distance.}
  \label{f:funk-hilbert}
\end{figure}
%-----------------------------------------------------------------------

The \emph{Hilbert distance} with respect to $K$, denoted $\distHilb[K](\cdot,\cdot)$, is the symmetrization of the Funk distance. Letting $x'$ denote the point where the directed ray from $y$ through $x$ intersects $\bd K$, it is defined to be
\[
    \distHilb[K](x,y) 
        ~ = ~ \inv{2} \left( \distFunk[K](x,y) + \distFunk[K](y,x) \right) 
        ~ = ~ \inv{2} \ln \frac{\|y-x'\|}{\|x-x'\|} \frac{\|x-y'\|}{\|y-y'\|}
\]
(see Figure~\ref{f:funk-hilbert}(b)). It is well known that this is a metric (symmetric, positive except when $x=y$, and satisfying the triangle inequality). Observe that the quantity in the logarithm is the cross ratio $(x, y; y', x')$, and hence the Hilbert distance is invariant under invertible projective transformations. 

Before defining volume and surface areas, let us first introduce some related concepts. Given a convex body $D$ in $\RE^d$ with $O \in \interior(D)$ and $u \in \RE^d$, the \emph{Minkowski functional} (or \emph{gauge}) induced by $D$ is defined by
\[
    \|u\|_D ~ = ~ \inf \{\lambda > 0 \ST u \in \lambda D\}.
\]
This functional is positively homogeneous and subadditive. If $D$ is centrally symmetric, then $\|\cdot\|_D$ is a norm in the usual sense; in general, it is the Minkowski functional (or gauge) of $D$. A \emph{Finsler metric} on a manifold is a continuous function on its tangent bundle, whose restriction to each tangent space is such a gauge. It is well known that both the Funk and Hilbert geometries induce a Finsler structure~\cite{Tro14}.

To define the Finsler structure for Funk, consider any $x \in \interior(K)$. Identify the tangent space at $x$, denoted $T_x$, with $\RE^d$. For each nonzero $v \in T_x$, let $x^+$ denote the point where the ray emanating from $x$ in direction $v$ intersects $\bd K$ (see Figure~\ref{f:finsler}(a)), and let $t$ be a positive scalar such that $x + t v = x^+$. The resulting Finsler gauge at $x$ is defined as 
\[
    \finsFunk[K](x,v)
        ~ = ~ \frac{1}{t} ~~\text{and}~~ \finsFunk[K](x,0) ~ = ~ 0.
\]
This defines a gauge whose unit ball, denoted $\ballFunk[K](x)$, is $K - x$. Thus, the ball is just a translate of $K$ such that $x$ coincides with the origin.

%-----------------------------------------------------------------------
\begin{figure}[htbp]
  \centerline{\includegraphics[scale=0.40]{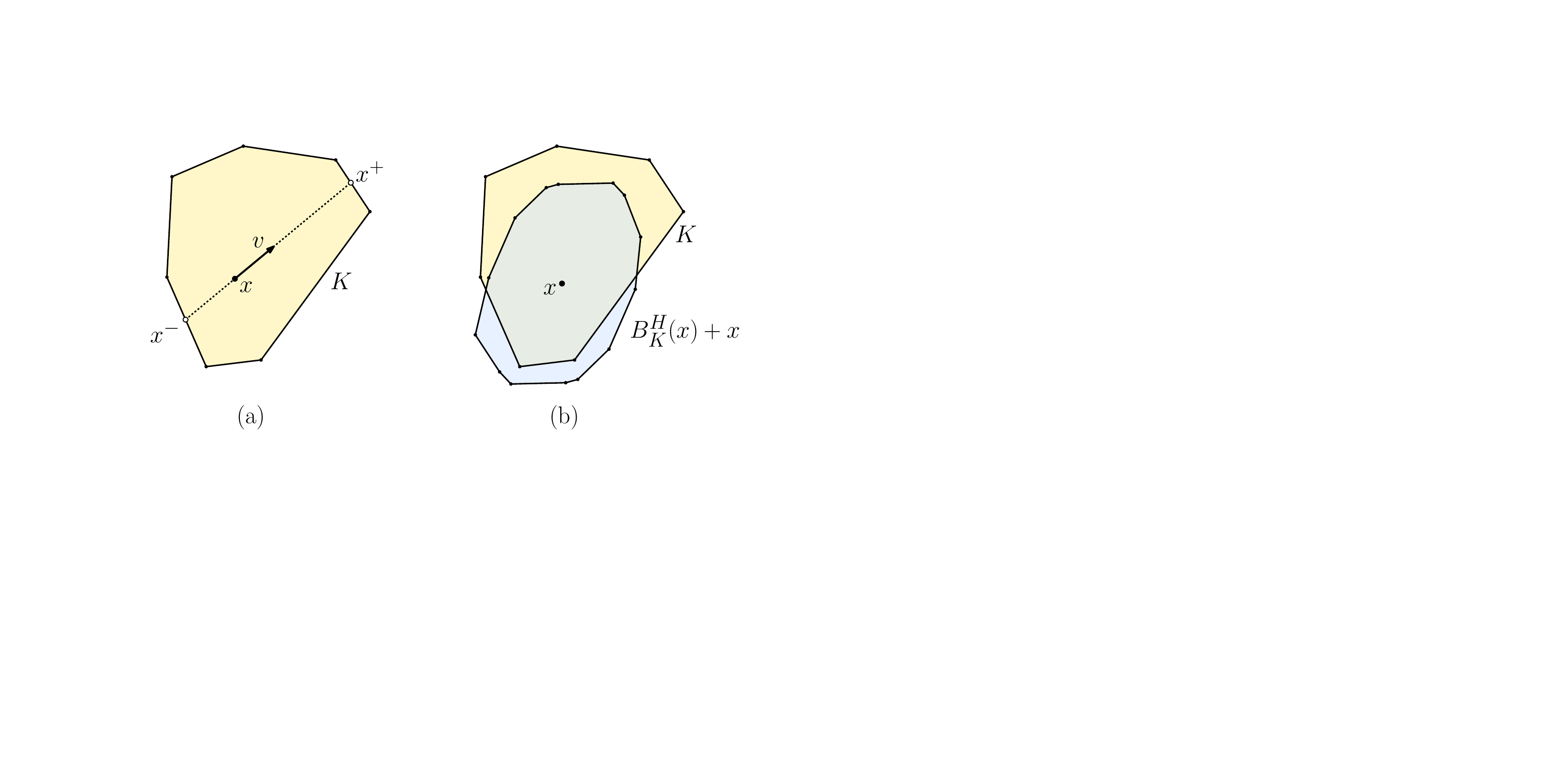}}
  \caption{(a) The Finsler structure and (b) the Finsler ball for Hilbert (recentered about $x$).}
  \label{f:finsler}
\end{figure}
%-----------------------------------------------------------------------

The Finsler structure for Hilbert is a symmetric variant of the Funk Finsler structure. For each $x \in \interior(K)$ and each nonzero vector $v \in T_x$, let $x^+$ be as before, and let $x^-$ be the point where the ray emanating from $x$ in direction $-v$ intersects $\bd K$ (see Figure~\ref{f:finsler}(a)). Let $t^+$ and $t^-$ be positive scalars such that $x + t^+ v = x^+$ and $x - t^- v = x^-$. Then
\[
    \finsHilb[K](x, v)
        ~ = ~ \frac{1}{2} \left( \frac{1}{t^+} + \frac{1}{t^-} \right) ~~\text{and}~~ \finsHilb[K](x,0) ~ = ~ 0.
\]
Since this gauge is symmetric, it is a norm. Its unit ball, denoted $\ballHilb[K](x)$, is centrally symmetric (see Figure~\ref{f:finsler}(b)). If $K$ is a polytope, then so is $\ballHilb[K](x)$.

The following lemma relates the polar of the Finsler ball in the Hilbert metric to the difference body of $\polar{(K-x)}$. The lemma is a straightforward consequence of the Finsler interpretation of the Hilbert metric (see, e.g.,~\cite{PaT09, Sch06b}). 

%-----------------------------------------------------------------------
\begin{lemma} \label{lem:hilbert-ball}
Given a convex body $K \subset \RE^d$ and $x \in \interior(K)$, $\polar{{\ballHilb[K](x)}} = \frac{1}{2} \Delta(\polar{(K - x)})$.
\end{lemma}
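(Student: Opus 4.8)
The plan is to work directly from the Finsler characterization of the Hilbert norm, translate the problem so that $x$ is the origin, and then dualize. First I would set $L = K - x$, so that $O \in \interior(L)$, $\ballFunk[K](x) = L$, and the ray from $x$ in direction $v$ reaching $\bd K$ at $x^+$ corresponds to the ray from $O$ in direction $v$ reaching $\bd L$; the scalar $t^+$ with $x + t^+ v = x^+$ satisfies $t^+ = 1/\mu_L(v)$, and likewise the opposite ray gives $t^- = 1/\mu_{-L}(v) = 1/\mu_L(-v)$. Hence $\finsHilb[K](x,v) = \tfrac12(\mu_L(v) + \mu_L(-v))$. So $\ballHilb[K](x)$ is the unit ball of the (symmetric) norm $N(v) = \tfrac12(\mu_L(v)+\mu_L(-v))$, i.e. $\ballHilb[K](x) = \{v : \mu_L(v) + \mu_L(-v) \le 2\}$.

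Next I would identify this unit ball as a Minkowski-type average. The key observation is that $N = \tfrac12(\mu_L + \mu_{-L})$ is the gauge whose unit ball is $2\,(L^{-1} \# (-L)^{-1})$ in the "inverse/harmonic" sense — more concretely, I would use the standard fact that for gauges, $\mu_{A} + \mu_{B}$ is the gauge of the set $\{v : \mu_A(v)+\mu_B(v)\le 1\}$, and that the polar of this set is exactly $\conv(\polar A \cup \polar B)$ — because the support function of $\conv(\polar A \cup \polar B)$ is $\max(h_{\polar A}, h_{\polar B}) = \max(\mu_A,\mu_B)$... I need the additive version, so instead: the correct dual statement is that the polar of $A \oplus B$ has gauge $\mu_{\polar A} + \mu_{\polar B}$, equivalently $(A\oplus B)^\circ$ is the unit ball of $\mu_{\polar A}+\mu_{\polar B}$. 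Applying this with $A = \polar L$ and $B = \polar{(-L)} = -\polar L$ gives that the unit ball of $\mu_{L} + \mu_{-L}$ (since $\polar{\polar L} = L$) is $\big(\polar L \oplus (-\polar L)\big)^\circ = \big(\Delta(\polar L)\big)^\circ$. Rescaling by the factor $2$ from $N = \tfrac12(\mu_L+\mu_{-L})$, the unit ball of $N$ is $2\,\big(\Delta(\polar L)\big)^\circ$, that is, $\ballHilb[K](x) = 2\,\polar{\big(\Delta(\polar{(K-x)})\big)}$. Taking polars of both sides and using $\polar{(\alpha C)} = \tfrac1\alpha \polar C$ yields $\polar{\ballHilb[K](x)} = \tfrac12 \Delta(\polar{(K-x)})$, as claimed.

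The main obstacle is getting the duality bookkeeping exactly right: I must be careful that the relevant identity is the additive one, $(A\oplus B)^\circ = \{v : \mu_{\polar A}(v) + \mu_{\polar B}(v) \le 1\}$ (valid when $O \in \interior A \cap \interior B$), rather than the more familiar $\conv(\polar A\cup \polar B) = (A\cap B)^\circ$; these are dual to each other, and conflating them would flip $\oplus$ with $\cap$. I would verify the additive identity cleanly by comparing gauges: $\mu_{(A\oplus B)^\circ}(v) = h_{A\oplus B}(v) = h_A(v) + h_B(v) = \mu_{\polar A}(v) + \mu_{\polar B}(v)$, using that the gauge of a polar is the support function of the body. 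After that, the only remaining care is tracking the two factors of $2$ (one in the definition of $\finsHilb$, one from $\Delta$) so they combine to the single $\tfrac12$ in the statement, and noting that central symmetry of $\ballHilb[K](x)$ — and the polytope remark — follow for free since $\Delta(\cdot)$ is centrally symmetric and is a polytope whenever its argument is.
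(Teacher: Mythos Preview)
Your proof is correct. The paper does not actually prove this lemma; it simply states that it ``is a straightforward consequence of the Finsler interpretation of the Hilbert metric'' and cites \cite{PaT09, Sch06b}. Your argument is exactly the natural way to unpack that remark: compute the Hilbert Finsler norm at $x$ as $\tfrac12(\mu_L+\mu_{-L})$ with $L=K-x$, and then dualize using the identity $\mu_{(A\oplus B)^\circ}=h_A+h_B=\mu_{\polar A}+\mu_{\polar B}$ applied to $A=\polar L$, $B=-\polar L$. The bookkeeping you flag (the two factors of $2$, and using the additive identity rather than the $\cap/\conv$ one) is all handled correctly.
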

%-----------------------------------------------------------------------

%-----------------------------------------------------------------------
\subsection{Holmes--Thompson Volume and Surface Area} \label{s:ht-vol-area}
%-----------------------------------------------------------------------

There are several ways to define volume and surface area in Finsler spaces. The volume of a subset $G \subset \interior(K)$ arises by integrating the weighted contributions of volume elements over $G$, where the weight of each element depends on the local geometry, as expressed through the Finsler ball. Intuitively, as the size of the Finsler ball decreases, the relative importance of the associated Lebesgue volume element increases. Due to the reciprocal nature of polarity, as the volume of the Finsler ball decreases, the volume of its polar increases. This gives rise to the following definitions, due to Holmes and Thompson~\cite{HoT79}.

Let us first consider the Funk geometry. Given a convex body $K$ and $x \in \interior(K)$, the Holmes--Thompson volume element is obtained by scaling the Lebesgue volume element $d\lambda_d(x)$ by the ratio of the Lebesgue volume of the polar of the Funk Finsler ball, $\lambda_d\left(\polar{{\ballFunk[K](x)}}\right)$, and the Lebesgue volume of the unit Euclidean ball, $\omega_d$~\cite{FVW23, HoT79}. Define the \emph{Funk volume element} to be
\[
    d \volFunk[K](x) 
        ~ = ~ \frac{1}{\omega_d} \lambda_d\left(\polar{{\ballFunk[K](x)}}\right) \, d \lambda_d(x).
\] 
Recall that $\ballFunk[K](x) = K-x$. The \emph{Funk volume} of any convex body $G \subset \interior(K)$ is
\[
    \volFunk[K](G) 
        ~ = ~ \int_{x \in G} d\volFunk[K](x).
\] 

The Holmes--Thompson surface area in the Funk geometry is defined analogously. At each smooth boundary point $x \in \bd G$, let $T_x$ denote the tangent space to $\bd G$ at $x$, viewed as a linear subspace of $\RE^d$ and equipped with the induced Hausdorff measure. The \emph{Funk surface-area element} is given by
\[
    d \areaFunk[K](x) 
        ~ = ~ \frac{1}{\omega_{d-1}} \lambda_{d-1} \left(\polarIn[T_x]{\big(\ballFunk[K](x) \cap T_x\big)}\right) \, d \lambda_{d-1}(x).
\]
By Lemma~\ref{lem:slice}, we can express this alternatively in terms of the projected polar as
\[
    d \areaFunk[K](x) 
        ~ = ~ \frac{1}{\omega_{d-1}} \lambda_{d-1}\left( \orthproj{\polar{{\ballFunk[K](x)}}}{T_x} \right) \, d \lambda_{d-1}(x).
\]
Since $\ballFunk[K](x) = K-x$, this is exactly
\begin{equation} \label{eq:funk-area-element}
    d \areaFunk[K](x) 
        ~ = ~ \frac{1}{\omega_{d-1}} \lambda_{d-1}\left( \orthproj{\polar{(K-x)}}{T_x} \right) \, d \lambda_{d-1}(x).
\end{equation}
The \emph{Funk surface area} of $G \subset \interior(K)$ is
\[
    \areaFunk[K](G) 
        ~ = ~ \int_{x \in \bd G} d \areaFunk[K](x).
\]

The Holmes--Thompson volume and surface area in the Hilbert geometry are defined analogously, replacing the Funk Finsler ball $\ballFunk[K](x)$ by the Hilbert Finsler ball $\ballHilb[K](x)$. We denote the resulting quantities by $\volHilb[K](G)$ and $\areaHilb[K](G)$.

Faifman showed that the Hilbert- and Funk-based volumes and surface areas are related up to factors that depend on dimension.

%-----------------------------------------------------------------------
\begin{lemma}[Faifman~\cite{Fai24}] \label{lem:mod-def}
Let $K \subset \RE^d$ be a convex body, let $G \subset \interior(K)$ be a convex body, and let $\beta(d) = \binom{2d}{d}/2^d$. Then
\begin{align*}
    \volFunk[K](G) 
        & ~\leq~ \volHilb[K](G) ~\leq~ \beta(d) \cdot \volFunk[K](G) \qquad\text{and} \\
    \areaFunk[K](G) 
        & ~\leq~ \areaHilb[K](G) ~\leq~ \beta(d-1) \cdot \areaFunk[K](G).
\end{align*}
\end{lemma}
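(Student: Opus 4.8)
The plan is to reduce everything to a pointwise comparison of volume elements and surface-area elements, and then integrate. Since $\volFunk[K](G) = \int_{x \in G} d\volFunk[K](x)$ and $\volHilb[K](G) = \int_{x \in G} d\volHilb[K](x)$, and likewise for the surface areas, it suffices to show at each point $x \in \interior(K)$ that
\[
    \lambda_d\!\left(\polar{{\ballFunk[K](x)}}\right)
        ~\leq~ \lambda_d\!\left(\polar{{\ballHilb[K](x)}}\right)
        ~\leq~ \beta(d) \cdot \lambda_d\!\left(\polar{{\ballFunk[K](x)}}\right),
\]
together with the analogous $(d-1)$-dimensional inequality on each tangent hyperplane $T_x$ for the surface-area elements (using the $\polarIn[T_x]{\,\cdot\,}$ form of the area elements, which localizes the comparison to the slice $T_x$). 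Fixing $x$ and writing $L = \polar{(K-x)}$, we have by definition $\polar{{\ballFunk[K](x)}} = L$, while Lemma~\ref{lem:hilbert-ball} gives $\polar{{\ballHilb[K](x)}} = \tfrac12 \Delta(L) = \tfrac12(L \oplus -L)$. So the entire lemma comes down to the following purely convex-geometric statement: for any convex body $L$ with $O \in \interior(L)$,
\[
    \vol\!\left(L\right) ~\leq~ \vol\!\left(\tfrac12 \Delta(L)\right) ~\leq~ \binom{2d}{d}/2^d \cdot \vol\!\left(L\right)
\]
in dimension $d$ (and the same with $d$ replaced by $d-1$ for the slices).

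The key steps are then: (i) rewrite $\vol(\tfrac12\Delta(L)) = 2^{-d}\vol(\Delta(L))$, so the claim becomes $2^d \vol(L) \le \vol(\Delta(L)) \le \binom{2d}{d}\vol(L)$; (ii) the lower bound $\vol(\Delta(L)) \ge 2^d \vol(L)$ is the classical Rogers--Shephard--type easy direction, which follows since $\Delta(L) = L \oplus (-L) \supseteq L \oplus \{-p\} = L - p$ for any fixed $p \in L$ — more carefully, $\tfrac12\Delta(L) \supseteq L - p$ when $p$ is suitably chosen (e.g. if $O$ is a centroid-like point, or simply by Brunn--Minkowski: $\vol(\Delta(L))^{1/d} = \vol(L\oplus(-L))^{1/d} \ge \vol(L)^{1/d} + \vol(-L)^{1/d} = 2\vol(L)^{1/d}$, giving the bound cleanly); (iii) the upper bound $\vol(\Delta(L)) \le \binom{2d}{d}\vol(L)$ is exactly the Rogers--Shephard inequality for the difference body. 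Finally (iv) integrate the pointwise bounds over $G$ (for volume) and over $\bd G$ (for surface area, applying the $(d-1)$-dimensional version of the convex-geometry inequality to the slices $L \cap T_x$ and $\Delta(L) \cap T_x$ — here one should note $\tfrac12\Delta(L) \cap T_x \supseteq \tfrac12\Delta(L \cap T_x)$ is not quite what is needed, so it is cleaner to apply Lemma~\ref{lem:slice} first to pass to projections $\orthproj{L}{T_x}$ and $\orthproj{\tfrac12\Delta(L)}{T_x} = \tfrac12\Delta(\orthproj{L}{T_x})$, since projection commutes with Minkowski sum, and then invoke the $(d-1)$-dimensional Brunn--Minkowski and Rogers--Shephard inequalities on $\orthproj{L}{T_x}$).

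The main obstacle is step (iv) in the surface-area case: one must be careful that the polar-slice appearing in $d\areaHilb[K](x)$ really does equal $\tfrac12$ times the difference body of the polar-slice appearing in $d\areaFunk[K](x)$, so that the Rogers--Shephard machinery applies in the right dimension. This is where Lemma~\ref{lem:slice} (projection-section duality) is essential: it converts $\polarIn[T_x]{(\ballFunk[K](x)\cap T_x)}$ into $\orthproj{L}{T_x}$ and $\polarIn[T_x]{(\ballHilb[K](x)\cap T_x)}$ into $\orthproj{\tfrac12\Delta(L)}{T_x}$, and since orthogonal projection is a linear map it commutes with both Minkowski addition and central reflection, so $\orthproj{\tfrac12\Delta(L)}{T_x} = \tfrac12\Delta(\orthproj{L}{T_x})$. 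Once this identity is in hand, the $(d-1)$-dimensional Rogers--Shephard and Brunn--Minkowski inequalities finish the surface-area estimate exactly as in the volume case, and the constant is $\beta(d-1) = \binom{2(d-1)}{d-1}/2^{d-1}$ as claimed. No smoothness of $G$ beyond what is already assumed is needed, since the bounds are pointwise on the integrand.
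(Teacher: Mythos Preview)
The paper does not supply its own proof of this lemma; it is quoted as a result of Faifman~\cite{Fai24} and used as a black box. So there is no ``paper's proof'' to compare against, and your proposal should be judged on its own merits.

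Your argument is correct and is in fact the standard one. The reduction to a pointwise inequality on the densities is immediate from the definitions, and after invoking Lemma~\ref{lem:hilbert-ball} the volume case is exactly the pair of classical inequalities $2^d\,\lambda_d(L)\le\lambda_d(\Delta(L))\le\binom{2d}{d}\lambda_d(L)$: the lower bound is Brunn--Minkowski applied to $L\oplus(-L)$, and the upper bound is the Rogers--Shephard difference-body inequality. For the surface-area case you correctly identify the potential trap (slicing does not commute with taking difference bodies) and resolve it the right way: Lemma~\ref{lem:slice} converts $\polarIn[T_x]{(\ballHilb[K](x)\cap T_x)}$ into $\orthproj{\tfrac12\Delta(L)}{T_x}$, and since orthogonal projection is linear it does commute with Minkowski sum and reflection, so this equals $\tfrac12\Delta(\orthproj{L}{T_x})$. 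The $(d-1)$-dimensional Brunn--Minkowski and Rogers--Shephard inequalities then apply to $\orthproj{L}{T_x}$ and give the constant $\beta(d-1)$. Nothing is missing.
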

%-----------------------------------------------------------------------

%-----------------------------------------------------------------------
\subsection{Cones} \label{s:cones}
%-----------------------------------------------------------------------

The standard Cauchy formula relies on orthogonal projections. We shall see that in the context of the Funk geometry induced by a convex body, the appropriate generalization uses central projections towards the body's boundary. Such projections naturally involve cones. Throughout this paper, a \emph{cone} in $\RE^d$ denotes a full-dimensional convex set closed under positive scaling (i.e., if $x$ is in the set, then $\lambda x$ is in the set for all $\lambda \geq 0$). We assume our cones are \emph{pointed}, meaning that they contain no lines. This implies that each cone has a unique \emph{apex} at the origin.

The \emph{conical hull} of a convex set $G \subset \RE^d$, denoted $\cone(G)$, is the smallest cone containing $G$, that is,
\[
    \cone(G) 
        ~ := ~ \{ \gamma x \ST x \in G ~\text{and}~ \gamma \geq 0 \}
\]
(see Figure~\ref{f:cone-1}(a)). Let $K \subset \RE^d$ be a closed, full-dimensional convex set. For any boundary point $v \in \bd K$, the \emph{normal cone} at $v$ consists of all outer normal vectors to $K$ at $v$:
\[
    \polar{K_v}
        ~ := ~ \left\{y \in \RE^d \ST \inner{y}{x-v} \leq 0, ~\text{for all}~ x \in K \right\}.
\]
(see Figure~\ref{f:cone-1}(b)). In the cases of primary interest to us, such as when $v$ is a vertex of a polytope or the apex of a pointed cone, the tangent cone $K_v = \cone(K-v)$ is pointed. In these settings, the normal cone is full-dimensional and coincides with the polar cone $\polar{K_v}$.

%-----------------------------------------------------------------------
\begin{figure}[htbp]
  \centerline{\includegraphics[scale=0.40,page=1]{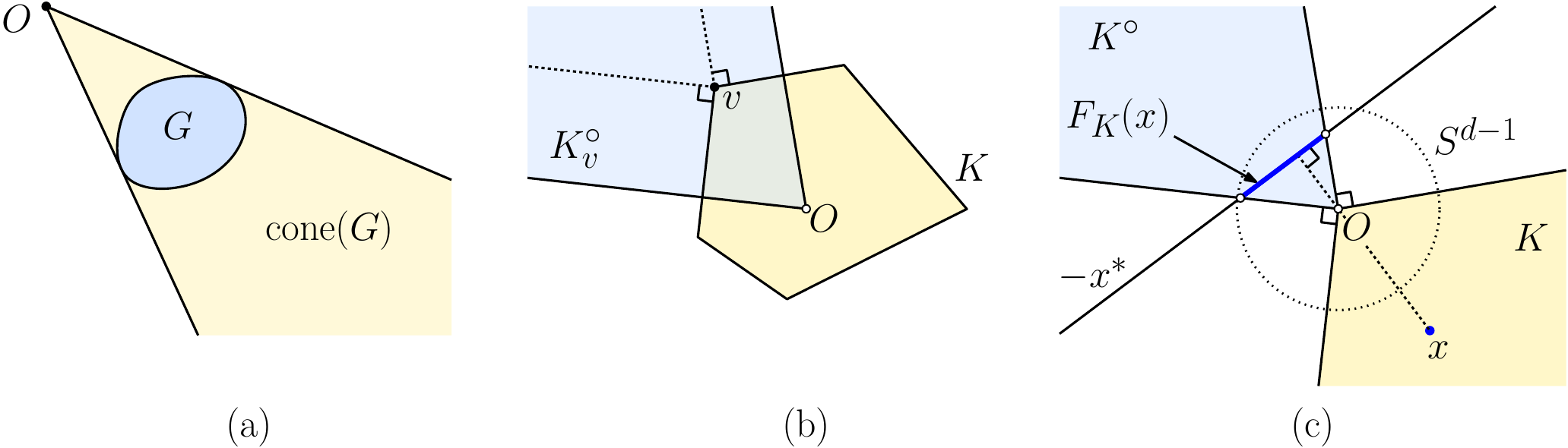}}
  \caption{(a) The cone $\cone(G)$, (b) the normal cone $\polar{K_v}$, and (c) $F_K(x)$.}
  \label{f:cone-1}
\end{figure}
%-----------------------------------------------------------------------

For any nonzero vector $x \in \RE^d$, we denote the associated \emph{dual hyperplane} by
\begin{equation} \label{eq:dual-hyperplane}
    x^* ~ = ~ \left\{y \in \RE^d \ST \inner{y}{x} = 1 \right\}.
\end{equation}
For a pointed cone $K \subset \RE^d$ and a vector $x \in \interior(K)$, we define the \emph{dual cross-section} $F_K(x)$ to be the intersection of the polar cone $\polar{K}$ with the dual hyperplane associated with $-x$ (see Figure~\ref{f:cone-1}(c)). That is,
\begin{equation} \label{eq:F-def}
    F_K(x) 
        ~ := ~ \polar{K} \cap -x^*
        ~ = ~ \polar{K} \cap (-x)^* 
        ~ = ~ \{ y \in \polar{K} : \inner{x}{y} = -1 \}.
\end{equation}
Geometrically, this is the slice of the polar cone by a hyperplane orthogonal to the direction $x$. Since $x$ lies in the interior of $K$, and every vector in $\polar{K}$ has a nonpositive inner product with every vector in $K$, this hyperplane intersects $\polar{K}$ in a bounded set.

Next, we establish a simple technical lemma relating the polar of a body to the polar of its subtended cones. This result allows us to characterize the facets of the polar polytope using local cone geometry.

%-----------------------------------------------------------------------
\begin{lemma} \label{lem:polar-slice}
Let $Q \subset \RE^d$ be a convex body containing the origin in its interior. Let $v \in \RE^d \setminus \interior(Q)$. Let $Q_v = \cone(Q-v)$ be the cone subtended by $Q$ at $v$ (see Figure~\ref{f:cone-4}(a)). Then the intersection of the polar cone $\polar{Q_v}$ with the dual hyperplane $v^*$ coincides with the slice of the polar body $\polar{Q}$ by $v^*$, that is,
\[
    \polar{Q_v} \cap v^* 
        ~ = ~ \polar{Q} \cap v^*
\]
(see Figure~\ref{f:cone-4}(b) and (c)).
\end{lemma}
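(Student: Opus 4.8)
The plan is to prove the two-way inclusion between the planar slices $\polar{Q_v} \cap v^*$ and $\polar{Q} \cap v^*$, exploiting that $Q_v = \cone(Q - v)$ and working with the definitions of polarity and polar cones directly. First I would observe the key elementary fact about polars of cones: for a cone $C$ (closed under nonnegative scaling), a vector $y$ lies in $\polar{C}$ if and only if $\inner{z}{y} \le 0$ for all $z \in C$, because the scaling invariance of $C$ forces the constraint $\inner{z}{y} \le 1$ to tighten to $\inner{z}{y} \le 0$. Applying this with $C = Q_v = \cone(Q - v)$, and noting that $\cone(Q-v)$ is the set of all $\gamma(q - v)$ with $q \in Q$ and $\gamma \ge 0$, we get that $y \in \polar{Q_v}$ iff $\inner{q - v}{y} \le 0$ for every $q \in Q$, i.e. iff $\inner{q}{y} \le \inner{v}{y}$ for all $q \in Q$.

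Next I would impose the constraint that $y$ also lies on the dual hyperplane $v^* = \{y : \inner{v}{y} = 1\}$. Under this assumption, the condition $y \in \polar{Q_v}$ becomes $\inner{q}{y} \le \inner{v}{y} = 1$ for all $q \in Q$, which is precisely the statement that $y \in \polar{Q}$. Hence $y \in \polar{Q_v} \cap v^*$ iff $y \in \polar{Q}$ and $\inner{v}{y} = 1$, i.e. iff $y \in \polar{Q} \cap v^*$. This gives the claimed equality in one clean computation; there is essentially no ``hard part'' here, since the whole argument is a matter of unwinding definitions and using the scaling trick. The one point requiring a small amount of care is verifying that $Q_v$ really is a \emph{pointed} cone (so that its polar behaves as expected and the statement of the lemma — which implicitly invokes the normal-cone/polar-cone identification — is meaningful); this follows because $v \notin \interior(Q)$, so $Q - v$ lies in a closed halfspace through the origin, whence $\cone(Q-v)$ contains no line.

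Finally, I would remark (to connect with how the lemma is used downstream) that the same argument shows more: for \emph{any} nonzero scalar multiple of $v$ one can slice along the corresponding dual hyperplane and obtain the analogous identification, which is exactly the flavor needed to relate $F_K(x) = \polar{K} \cap -x^*$ to cone geometry. But for the lemma as stated, the proof is complete once the displayed chain of equivalences is recorded.
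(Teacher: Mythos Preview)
Your proof is correct and follows essentially the same approach as the paper: unwind the polar-cone condition $\inner{q-v}{y}\le 0$ for all $q\in Q$, then specialize to $\inner{v}{y}=1$ to recover the polar-body condition $\inner{q}{y}\le 1$. The aside about pointedness of $Q_v$ is unnecessary here---the equality $\polar{Q_v}\cap v^* = \polar{Q}\cap v^*$ holds regardless, since the characterization $\polar{C}=\{y:\inner{z}{y}\le 0\ \forall z\in C\}$ is valid for any cone---but it does no harm.
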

%-----------------------------------------------------------------------

%-----------------------------------------------------------------------
\begin{figure}[htbp]
  \centerline{\includegraphics[scale=0.40,page=4]{Figs/cone}}
  \caption{(a) The cone $Q_v$, (b) $\polar{Q_v} \cap v^*$, and (c) $\polar{Q} \cap v^*$.}
  \label{f:cone-4}
\end{figure}
%-----------------------------------------------------------------------

%-----------------------------------------------------------------------
\begin{proof}
By definition, a vector $y \in \polar{Q_v}$ if and only if $\inner{z}{y} \le 0$ for all $z \in Q_v$. Since $Q_v$ is the conical hull of $Q-v$, this condition is equivalent to:
\[
    \inner{x-v}{y} ~ \leq ~ 0, \quad \forall x \in Q.
\]
Restricting our attention to the hyperplane $v^*$, where $\inner{v}{y}=1$, this inequality becomes:
\[
    \inner{x}{y} - \inner{v}{y} \leq 0 
        ~ \iff ~ \inner{x}{y} - 1 \leq 0 
        ~ \iff ~ \inner{x}{y} \leq 1, \quad \forall x \in Q.
\]
The condition $\inner{x}{y} \leq 1$ for all $x \in Q$ is exactly the definition of the polar body $\polar{Q}$. Thus, restricted to the hyperplane $v^*$, the condition for membership in the polar cone is identical to the condition for membership in the polar body.
\end{proof}
%-----------------------------------------------------------------------

When applied to the vertices of a polytope, this lemma provides a functional characterization of the boundary of the polar body. Recall that for a polytope $K$ with $O \in \interior(K)$, the facets of the polar body $\polar{K}$ are in one-to-one correspondence with the vertex set $V(K)$~\cite{Zie95}. Specifically, for each vertex $v \in V(K)$, the dual facet is the intersection of $\polar{K}$ with the supporting hyperplane $v^*$. Lemma~\ref{lem:polar-slice} implies that this facet is exactly $\polar{K_v} \cap v^*$, which matches the definition of the dual cross-section $F_{K_v}(-v)$.

%-----------------------------------------------------------------------
\begin{corollary} \label{cor:polar-facets}
Let $K \subset \RE^d$ be a convex polytope with $O \in \interior(K)$. For each vertex $v \in V(K)$, the facet of $\polar{K}$ dual to $v$, denoted $F_v$, is given by
\[
    F_v ~ = ~ \polar{K_v} \cap v^* ~ = ~ F_{K_v}(-v).
\]
Furthermore, $\bd \polar{K} = \bigcup_{v \in V(K)} F_v$.
\end{corollary}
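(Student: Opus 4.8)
The plan is to assemble the corollary from Lemma~\ref{lem:polar-slice} together with the classical vertex--facet duality for polar polytopes. First I would fix a vertex $v \in V(K)$ and apply Lemma~\ref{lem:polar-slice} with $Q = K$ and the point $v$; since $v \in \bd K \subseteq \RE^d \setminus \interior(K)$, the hypotheses are met, and the lemma yields $\polar{K_v} \cap v^* = \polar{K} \cap v^*$, where $v^* = \{y : \inner{v}{y} = 1\}$. Next I would invoke the standard fact (see~\cite{Zie95}) that for a polytope $K$ with $O \in \interior(K)$, the polar $\polar{K}$ is a polytope whose facets are in one-to-one correspondence with $V(K)$, the facet dual to $v$ being exactly $F_v = \{y \in \polar{K} : \inner{v}{y} = 1\} = \polar{K} \cap v^*$. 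Chaining these two identities gives $F_v = \polar{K_v} \cap v^*$.

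It then remains to match this set with the dual cross-section $F_{K_v}(-v)$. I would first check that this construction is legitimate: because $v$ is a vertex, $K_v = \cone(K - v)$ is a pointed cone (as noted in Section~\ref{s:cones}), and since $O \in \interior(K)$ we have $-v = O - v \in \interior(K - v) \subseteq \interior(K_v)$, so $-v$ is an interior vector of $K_v$ and $F_{K_v}(-v)$ is well defined by~\eqref{eq:F-def}. Unwinding that definition, $F_{K_v}(-v) = \polar{K_v} \cap \{y : \inner{-v}{y} = -1\} = \polar{K_v} \cap \{y : \inner{v}{y} = 1\} = \polar{K_v} \cap v^*$, which is precisely the set obtained above. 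Hence $F_v = \polar{K_v} \cap v^* = F_{K_v}(-v)$.

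For the final assertion, $\polar{K}$ is a polytope (being the polar of a polytope with the origin in its interior), so its boundary is the union of its facets; since by the duality above the facets are exactly $\{F_v : v \in V(K)\}$, we conclude $\bd \polar{K} = \bigcup_{v \in V(K)} F_v$.

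I do not expect a genuine obstacle here — the mathematical content sits entirely in Lemma~\ref{lem:polar-slice} and in the textbook polarity dictionary. The only points needing care are the sign bookkeeping in the dual-hyperplane notation (distinguishing the hyperplane $-x^*$ appearing in the definition of $F_K(x)$ from $v^*$, and verifying they agree when $x = -v$) and confirming that the pointedness and interiority hypotheses behind the definition of $F_{K_v}$ are actually in force at a vertex of $K$. Both become immediate once the relevant definitions are placed side by side.
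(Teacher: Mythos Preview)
Your proposal is correct and follows essentially the same route as the paper: the paper derives the corollary directly from Lemma~\ref{lem:polar-slice} together with the standard vertex--facet duality for polar polytopes (citing~\cite{Zie95}), identifying $F_v = \polar{K} \cap v^* = \polar{K_v} \cap v^*$ and then matching this with $F_{K_v}(-v)$. Your write-up is slightly more explicit in verifying the pointedness and interiority hypotheses and in tracking the sign in the dual-hyperplane notation, but the argument is the same.
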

%-----------------------------------------------------------------------

Extending the definition of Funk volume from bounded convex bodies to unbounded cones requires care. If $G$ and $K$ are pointed cones in $\RE^d$ with $(G \setminus \{O\}) \subset \interior(K)$, the standard Funk distance is degenerate, and the classical volume is infinite. However, the projective nature of Funk geometry allows us to define a meaningful volume for $G$ relative to $K$ by considering cross-sections. Faifman~\cite{Fai24} demonstrated that the Funk geometry is essentially projective. A key consequence is that the Holmes--Thompson volume of the section $G \cap H$, measured with respect to the ambient section $K \cap H$, is invariant to the choice of the hyperplane $H$, provided $K \cap H$ is bounded. We call such a hyperplane \emph{admissible}.

This invariance implies that the volume is intrinsic to the nested cone's structure. We define the \emph{cone Funk volume} of $G$ relative to $K$ by the integral
\begin{equation} \label{eq:funk-vol-cone-def}
     \volFunk[K](G)
        ~ := ~ \frac{1}{\omega_{d-1}} \int_{\Sigma(G)} \lambda_{d-1}(F_K(s)) \, d\sigma(s),
        \qquad\text{where $\Sigma(G) := S^{d-1} \cap G$}
\end{equation}
(see Figure~\ref{f:cone-2}(a)).  
In Lemma~\ref{lem:cone-vol} (see Appendix~\ref{s:just-cone-vol}), we show that this integral coincides with $\volFunk[K \cap H](G \cap H)$ for any admissible hyperplane $H$, thus justifying this terminology.

%-----------------------------------------------------------------------
\begin{figure}[htbp]
  \centerline{\includegraphics[scale=0.40,page=2]{Figs/cone}}
  \caption{(a) Defining $\volFunk[K](G)$ and (b) the gnomonic projection.}
  \label{f:cone-2}
\end{figure}
%-----------------------------------------------------------------------

When dealing with spherical cross-sections of cones, it is useful to relate their spherical measures to the Euclidean measures of corresponding hyperplane sections. Let $C \subset \RE^d$ be a pointed cone, let $\Omega = C \cap S^{d-1}$, and let $u \in S^{d-1}$ satisfy $\inner{x}{u} < 0$ for all $x \in \Omega$. The associated gnomonic projection maps $\Omega$ onto the hyperplane section $-u^* \cap C$ (see Figure~\ref{f:cone-2}(b)). A straightforward Jacobian computation yields the following lemma (see, e.g., \cite[Lemma~3.10]{Fai24}). 

%-----------------------------------------------------------------------
\begin{lemma}[Cone gnomonic projection] \label{lem:gnomonic}
Let $C \subset \RE^d$ be a pointed cone, let $\Omega = C \cap S^{d-1}$, and let $u \in S^{d-1}$ satisfy $\inner{x}{u} < 0$, for all $x \in \Omega$ (equivalently, $u \in \interior(\polar{C})$). Define
\[
    P_u(x) ~ := ~ -\frac{x}{\inner{x}{u}} \qquad (x \in \Omega).
\]
Then $P_u(\Omega) = -u^* \cap C$, and
\[
    \lambda_{d-1}(P_u(\Omega))
        ~ = ~ \int_{\Omega} |\inner{x}{u}|^{-d}\, d\sigma(x).
\]
\end{lemma}
%-----------------------------------------------------------------------

%=======================================================================
\section{Vertex Decomposition for Convex Polytopes} \label{s:vertex-decomp-ptope}
%=======================================================================

In this section, we establish a discrete surface area formula for the case where $K$ is a convex polytope. Our result provides a vertex-based decomposition of the Funk surface area similar in spirit to that of Alexander \etal~\cite{ABF05} in the planar Hilbert geometry, but derived by a distinct approach that applies in arbitrary dimensions.

Our proof proceeds by decomposing the Funk surface area of $G$ into local contributions associated with the vertices of the ambient polytope $K$. For each vertex $v \in V(K)$, let $K_v = \cone(K-v)$ and $G_v = \cone(G-v)$ denote the local cones subtended at $v$. We will show that the total Funk surface area of $G$ is the sum of the Funk volumes of the cones $G_v$ relative to $K_v$. We begin with a lemma that expresses the Funk volume of $\cone(G)$ relative to a pointed cone $K$ as a boundary integral involving the projected areas of the dual cross-sections $F_K(x)$. In the statement below, $T_x$ denotes the tangent space of $G$ at $x$.

%-----------------------------------------------------------------------
\begin{figure}[htbp]
  \centerline{\includegraphics[scale=0.40,page=3]{Figs/cone}}
  \caption{Lemma~\ref{lem:cone-boundary-integral}.}
  \label{f:cone-3}
\end{figure}
%-----------------------------------------------------------------------

%-----------------------------------------------------------------------
\begin{lemma}\label{lem:cone-boundary-integral}
Let $K \subset \RE^d$ be a pointed cone and let $G \subset \interior(K)$ be a convex body. Then
\[
    2 \cdot \volFunk[K](\cone(G))
        ~ = ~ \frac{1}{\omega_{d-1}}\int_{\bd G} \lambda_{d-1} \big(\orthproj{F_K(x)}{T_x}\big) \, d\lambda_{d-1}(x)
\]
(see Figure~\ref{f:cone-3}).
\end{lemma}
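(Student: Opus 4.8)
The plan is to convert the spherical integral defining $\volFunk[K](\cone(G))$ into a boundary integral over $\bd G$ via a gnomonic-type radial parametrization, and then identify the integrand pointwise with the projected dual cross-section. Recall that by~\eqref{eq:funk-vol-cone-def}, $\volFunk[K](\cone(G)) = \frac{1}{\omega_{d-1}} \int_{\Sigma(\cone(G))} \lambda_{d-1}(F_K(s)) \, d\sigma(s)$, where $\Sigma(\cone(G)) = S^{d-1} \cap \cone(G)$. Since $O \in \interior(K)$ but $O \notin G$ (as $G$ is a convex body in the interior of a pointed cone with apex at $O$), each ray from the origin through a direction $s \in \Sigma(\cone(G))$ meets $\bd G$ in exactly \emph{two} points — one ``near'' facing the apex and one ``far'' — which is the source of the factor of $2$ on the left-hand side. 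I would parametrize $\bd G$ by the map $x \mapsto x/\|x\|$ onto a subset of $S^{d-1}$, which is two-to-one onto $\Sigma(\cone(G))$, and compute the Jacobian of this radial projection from $\bd G$ to the sphere.

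The key steps, in order: (1) For $x \in \bd G$ with unit normal $n_x$ to $T_x$, write down the Jacobian of the radial map $\pi: \bd G \to S^{d-1}$, $\pi(x) = x/\|x\|$; a standard computation (cf. the gnomonic Jacobian behind Proposition~\ref{prop:gnomonic}) gives that $d\sigma(\pi(x)) = \frac{|\inner{x}{n_x}|}{\|x\|^{d}} \, d\lambda_{d-1}(x)$, since the oblique/radial stretching contributes $|\inner{x/\|x\|}{n_x}|$ and the radial scaling by $1/\|x\|$ contributes $\|x\|^{-(d-1)}$. (2) Under the substitution $s = \pi(x)$, rewrite $\lambda_{d-1}(F_K(s))$ in terms of $x$. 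Here I use that $F_K$ is defined on rays: $F_K(s) = \polar{K} \cap -s^*$ and $F_K(x) = \polar{K} \cap -x^*$, and these differ only by the scaling $-x^* = -(\|x\| s)^* $, i.e. $F_K(s)$ is the image of $F_K(x)$ under scaling by $\|x\|$ (mapping the hyperplane $\{\inner{x}{\cdot}=-1\}$ to $\{\inner{s}{\cdot}=-1\}$). Hence $\lambda_{d-1}(F_K(s)) = \|x\|^{d-1} \, \lambda_{d-1}(F_K(x))$. (3) Relate $|\inner{x}{n_x}| \cdot \lambda_{d-1}(F_K(x))$ to $\lambda_{d-1}(\orthproj{F_K(x)}{T_x})$: since $F_K(x)$ lies in the hyperplane with normal direction $x/\|x\|$, projecting it orthogonally onto $T_x$ (normal $n_x$) multiplies $(d-1)$-area by the cosine of the angle between the two hyperplanes, which is $|\inner{x/\|x\|}{n_x}|$; thus $\lambda_{d-1}(\orthproj{F_K(x)}{T_x}) = \frac{|\inner{x}{n_x}|}{\|x\|}\,\lambda_{d-1}(F_K(x))$. (4) Assemble: substituting $s = \pi(x)$ and summing over the two sheets of $\pi$,
\[
    \volFunk[K](\cone(G)) = \frac{1}{\omega_{d-1}} \int_{\Sigma(\cone(G))} \lambda_{d-1}(F_K(s)) \, d\sigma(s) = \frac{1}{2}\cdot\frac{1}{\omega_{d-1}}\int_{\bd G} \|x\|^{d-1}\lambda_{d-1}(F_K(x)) \cdot \frac{|\inner{x}{n_x}|}{\|x\|^{d}} \, d\lambda_{d-1}(x),
\]
and $\|x\|^{d-1}/\|x\|^d = 1/\|x\|$, so the bracketed factor becomes $\frac{|\inner{x}{n_x}|}{\|x\|}\lambda_{d-1}(F_K(x)) = \lambda_{d-1}(\orthproj{F_K(x)}{T_x})$ by step (3), which after multiplying both sides by $2$ gives the claim.

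The main obstacle I anticipate is handling the two-to-one nature of the radial projection $\pi$ carefully and rigorously — in particular justifying that almost every ray through the origin lying in $\cone(G)$ crosses $\bd G$ in exactly two points (true since $O \notin G$ and $G$ is convex and compact, with tangency a measure-zero event), and that the Jacobian formula in step (1) holds $\lambda_{d-1}$-a.e. on $\bd G$ (valid since $\bd G$ is $C^1$ a.e. as the boundary of a convex body, with a well-defined tangent space and outward normal almost everywhere). A secondary technical point is the sign/orientation bookkeeping: $\inner{x}{y} = -1$ on $F_K(x)$ means $x$ points ``into'' $-\polar{K}$, so one must be consistent that the relevant hyperplane normal is $x/\|x\|$ (up to sign) and that the cosine factors in steps (1) and (3) are the \emph{same} quantity $|\inner{x/\|x\|}{n_x}|$ — this is what makes the $\|x\|$ powers and the cosine factors combine so cleanly. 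Once these measure-theoretic and orientation details are pinned down, the computation is a direct change of variables.
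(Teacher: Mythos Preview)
Your proposal is correct and follows essentially the same approach as the paper's proof: both convert the spherical integral defining $\volFunk[K](\cone(G))$ to a boundary integral over $\bd G$ via the radial map $x \mapsto x/\|x\|$, use the homogeneity $F_K(s) = \|x\|\,F_K(x)$, identify the cosine factor $|\inner{x/\|x\|}{n_x}|$ both as the Jacobian term and as the projection factor onto $T_x$, and invoke the two-to-one nature of the radial map to produce the factor of $2$. Your Jacobian $d\sigma = \frac{|\inner{x}{n_x}|}{\|x\|^d}\,d\lambda_{d-1}(x)$ is exactly the paper's $d\sigma = \frac{|\inner{s}{u_G(x)}|}{\|x\|^{d-1}}\,d\lambda_{d-1}(x)$ rewritten using $s = x/\|x\|$, and the anticipated obstacles you flag (a.e.\ differentiability of $\bd G$, generic two-point intersection) are precisely the ones the paper addresses.
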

%-----------------------------------------------------------------------

%-----------------------------------------------------------------------
\begin{proof}
Recall that 
\[
  \volFunk[K](\cone(G))
    ~ = ~ \frac{1}{\omega_{d-1}} \int_{\Sigma(G)} \lambda_{d-1}(F_K(s)) \, d\sigma(s), \qquad\text{where $\Sigma(G) = S^{d-1} \cap \cone(G)$.}
\]
Let $u_G(x)$ denote the unique outer unit normal at $x \in \bd G$ (well-defined almost everywhere~\cite[Theorem 2.2.5]{Sch14}). Consider the radial map $x \mapsto s := x / \|x\|$ from $\bd G$ to $\Sigma(G)$. The associated surface measure transformation is
\begin{equation}  \label{eq:measure-transform}
    d\sigma(s) 
        ~ = ~ \frac{|\inner{s}{u_G(x)}|}{\|x\|^{d-1}} \, d\lambda_{d-1}(x).
\end{equation}
(Geometrically, the term $|\inner{s}{u_G(x)}|$ represents the cosine of the angle between the radial vector and the surface normal. This factor corrects for the ``tilt'' of the surface patch $\partial G$ relative to the radial projection onto the sphere.)

To relate the integrands, we invoke the homogeneity of the construction. Recall that $F_K(x) = \polar{K} \cap -x^*$. Since $(\alpha x)^* = \alpha^{-1} x^*$, we have $F_K(x) = \|x\|^{-1} F_K(s)$. Therefore,
\begin{equation} \label{eq:scaling}
    \lambda_{d-1}(F_K(s)) 
        ~ = ~ \|x\|^{d-1} \, \lambda_{d-1}(F_K(x)).
\end{equation}
Further, the projection of $F_K(x)$ (which lies in a hyperplane orthogonal to $x$) onto $T_x$ (orthogonal to $u_G(x)$) scales its $(d-1)$-measure by $|\inner{s}{u_G(x)}|$, which implies that
\begin{equation} \label{eq:projection}
    \lambda_{d-1} (\orthproj{F_K(x)}{T_x}) 
        ~ = ~ \left| \inner{s}{u_G(x)} \right| \, \lambda_{d-1}(F_K(x)).
\end{equation}
Combining Eqs.~\eqref{eq:measure-transform}--\eqref{eq:projection} yields
\[
    \lambda_{d-1}(\orthproj{F_K(x)}{T_x}) \, d \lambda_{d-1}(x) 
        ~ = ~ \lambda_{d-1}(F_K(s)) \, d \sigma(s).
\]
Since $G$ is a convex body strictly contained in the cone $K$ (and thus does not contain the apex $O$), any ray from the origin that intersects $\interior(G)$ pierces $\bd G$ in exactly two points. Thus, the map $x \mapsto s$ is generically $2$-to-$1$ from $\bd G$ onto $\Sigma(G)$. Integrating over $\bd G$ yields twice the integral over $\Sigma(G)$, that is,
\[
    \int_{\bd G} \lambda_{d-1} \big(\orthproj{F_K(x)}{T_x}\big) \, d\lambda_{d-1}(x)
        ~ = ~ 2 \, \int_{\Sigma(G)} \lambda_{d-1}(F_K(s)) \, d \sigma(s).
\]
Dividing by $\omega_{d-1}$ completes the proof.
\end{proof}
%-----------------------------------------------------------------------

The next step in our derivation uses the dual facet characterization (Corollary~\ref{cor:polar-facets}) to decompose the polar body's projection into a sum over its facets. 

%-----------------------------------------------------------------------
\begin{lemma} \label{lem:facets-polar}
Let $K \subset \RE^d$ be a convex polytope. For any $x \in \interior(K)$ and any $(d-1)$-dimensional linear subspace $T$,
\[
    2 \lambda_{d-1} \big( \orthproj{\polar{(K-x)}}{T} \big)
        ~ = ~ \sum_{v\in V(K)} \lambda_{d-1} \big( \orthproj{F_{K_v}(x-v)}{T} \big)
\]
(see Figures~\ref{f:facets-polar}(a) and~(b)).
\end{lemma}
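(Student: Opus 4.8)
The plan is to translate the polar body $\polar{(K-x)}$ to a fixed position, decompose its boundary into the dual facets indexed by the vertices of $K$, and show that the orthogonal projection onto $T$ of each facet, when summed, double-counts the projection of the whole polar body. Let me set $Q := K - x$, so that $O \in \interior(Q)$ and we are studying $\orthproj{\polar{Q}}{T}$. The vertices of $K$ correspond bijectively to vertices $v - x$ of $Q$, and the tangent cone of $Q$ at $v - x$ is exactly $K_v = \cone(K - v)$ translated appropriately; more precisely, $\cone(Q - (v-x)) = \cone(K - v) = K_v$. By Corollary~\ref{cor:polar-facets} applied to $Q$, the boundary of $\polar{Q}$ decomposes as $\bd \polar{Q} = \bigcup_{v \in V(K)} F_v$, where $F_v = \polar{K_v} \cap (v-x)^*$, and by the remark following Lemma~\ref{lem:polar-slice} this facet equals $F_{K_v}(-(v-x)) = F_{K_v}(x - v)$. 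So the right-hand side is precisely $\sum_{v} \lambda_{d-1}\big(\orthproj{F_v}{T}\big)$, the sum over all facets of $\polar{Q}$ of the $(d-1)$-measures of their orthogonal projections onto $T$.

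The heart of the argument is then the classical ``two-sided shadow'' fact: for any convex body $P$ with nonempty interior and any $(d-1)$-dimensional subspace $T$, the sum over all facets (or more generally, integrating $|\inner{n}{w}|$ against the surface area measure, where $w$ is the unit normal to $T$ and $n$ the outer normal of $\bd P$) equals exactly twice $\lambda_{d-1}(\orthproj{P}{T})$. Intuitively, the orthogonal projection $\pi_T: \RE^d \to T$ maps $\bd P$ onto $\orthproj{P}{T}$ in a generically $2$-to-$1$ fashion — each line orthogonal to $T$ through an interior point of the shadow meets $\bd P$ in exactly two points, one on the ``upper'' part and one on the ``lower'' part of the boundary — and the area-stretch factor (Jacobian) of $\pi_T$ restricted to a boundary facet with outer unit normal $n$ is exactly $|\inner{n}{w}|$. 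Since $\polar{Q}$ is a polytope here (as $K$ is a polytope, $\polar{K}$ is a polytope, and $\polar{Q}$ is an affine image... actually $\polar{(K-x)}$ is just a translate-type transform of $\polar{K}$ — more carefully, polars of polytopes are polytopes), this reduces to summing $|\inner{n_v}{w}| \cdot \lambda_{d-1}(F_v) = \lambda_{d-1}(\orthproj{F_v}{T})$ over facets $F_v$ with outer normals $n_v$, and invoking the $2$-to-$1$ covering. I would carry this out by: (i) fixing a unit normal $w$ to $T$; (ii) noting $\lambda_{d-1}(\orthproj{F_v}{T}) = |\inner{n_v}{w}|\,\lambda_{d-1}(F_v)$ for each facet, exactly as in Eq.~\eqref{eq:projection} of the previous lemma; (iii) applying the divergence theorem / projection formula, or equivalently the change-of-variables formula for the surface map $\pi_T|_{\bd \polar{Q}}$, which is $2$-to-$1$ onto $\orthproj{\polar{Q}}{T}$ away from a measure-zero set (the ``silhouette''), to conclude $\sum_v |\inner{n_v}{w}|\,\lambda_{d-1}(F_v) = 2\,\lambda_{d-1}(\orthproj{\polar{Q}}{T})$.

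The main obstacle — or rather, the point requiring the most care — is justifying the $2$-to-$1$ covering rigorously and handling the silhouette (the set of boundary points whose outer normal is parallel to $T$, i.e., $\inner{n}{w} = 0$), which projects to the relative boundary of the shadow and contributes measure zero. For a polytope this is transparent: only finitely many facets are involved, the silhouette lies in a finite union of lower-dimensional faces, and the partition of $\orthproj{\polar{Q}}{T}$ into ``top'' facets and ``bottom'' facets each covers the shadow once. Alternatively, and perhaps more cleanly for insertion into this paper, I would cite the standard Cauchy projection formula (the surface-area measure version: $\lambda_{d-1}(\orthproj{P}{u^\perp}) = \tfrac12 \int_{S^{d-1}} |\inner{u}{n}| \, dS_{d-1}(P, n)$, e.g.\ from Schneider~\cite{Sch14}) and simply specialize it to the polytope $\polar{Q}$ with $u = w$, then recognize the discrete surface-area measure as the sum over facets $F_v$. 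Either way, once the $2$-to-$1$ principle is in hand, substituting the facet identification $F_v = F_{K_v}(x-v)$ from Corollary~\ref{cor:polar-facets} and dividing appropriately gives the claimed identity.
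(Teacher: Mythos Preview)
Your proposal is correct and follows essentially the same route as the paper's proof: shift to $Q = K - x$, invoke Corollary~\ref{cor:polar-facets} to identify the facets of $\polar{Q}$ as $F_{K_v}(x-v)$, and then use the generic $2$-to-$1$ property of orthogonal projection from the boundary of a convex body onto a hyperplane. The paper dispatches the last step in a single sentence (``almost every line orthogonal to $T$ pierces its boundary in exactly two points''), whereas you spell out the Jacobian $|\langle n_v, w\rangle|$ and the silhouette issue more explicitly and offer the alternative of citing the Cauchy projection formula from~\cite{Sch14}; but these are differences of exposition, not of strategy.
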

%----------------------------------------------------------------------- 

%----------------------------------------------------------------------- 
\begin{proof}
Let $Q=K-x$. Since $x \in \interior(K)$, we have $O \in \interior(Q)$, and the vertices of $Q$ are exactly $u=v-x$, for $v \in V(K)$. For each such $u$,
\[
    Q_u ~ = ~ \cone(Q-u) ~=~ \cone\big((K-x)-(v-x)\big) ~=~ \cone(K-v) ~=~ K_v.
\]
Hence, by Corollary~\ref{cor:polar-facets}, the facet of $\polar{Q}=\polar{(K-x)}$
dual to $u$ is
\[
    F_{Q_u}(-u) ~ = ~ F_{K_v}(x-v).
\]
Thus the sets $F_{K_v}(x-v)$, for $v \in V(K)$, are exactly the facets of $P := \polar{(K-x)}$.

For almost every $y \in \orthproj{P}{T}$, the line orthogonal to $T$ through $y$ meets $\bd P$ in exactly two points, each in the relative interior of a unique facet of $P$. Therefore, the projections of the facets of $P$ cover $\orthproj{P}{T}$ with multiplicity $2$ almost everywhere, and summing their projected $(d-1)$-measures yields the stated identity.
\end{proof}
%-----------------------------------------------------------------------

%-----------------------------------------------------------------------
\begin{figure}[htbp]
  \centerline{\includegraphics[scale=0.40]{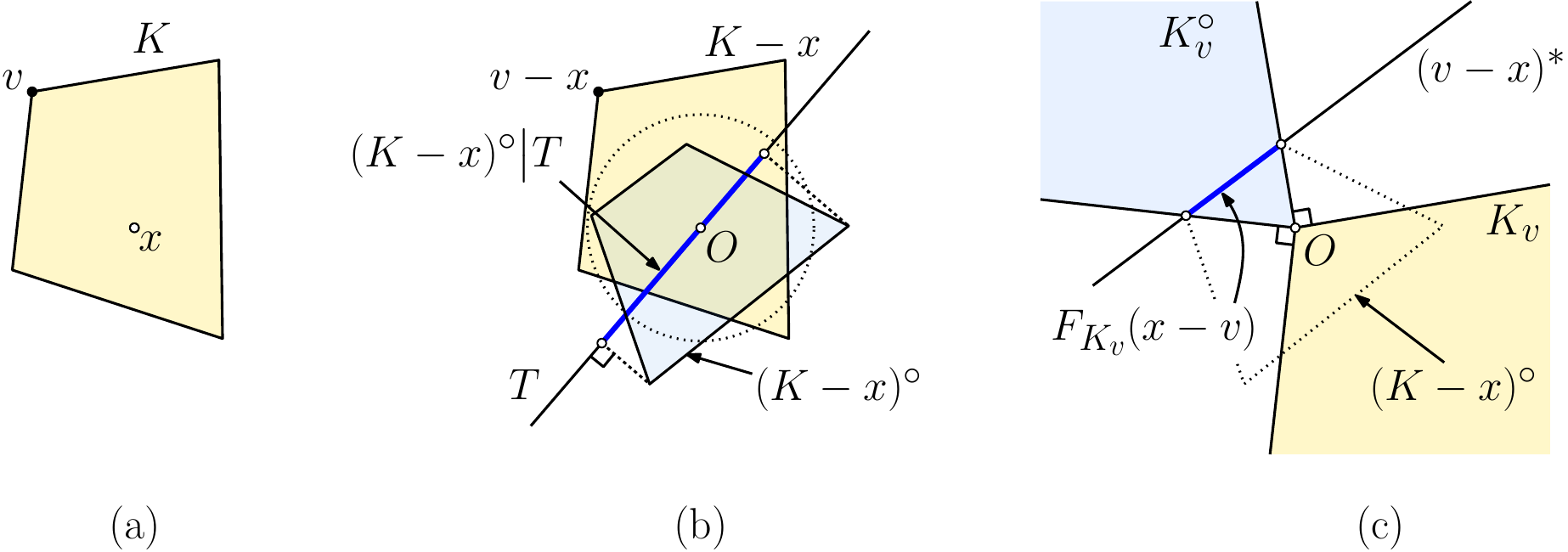}}
  \caption{Lemma~\ref{lem:facets-polar} and its proof.}
  \label{f:facets-polar}
\end{figure}
%-----------------------------------------------------------------------

We are now ready to prove the vertex decomposition theorem from Section~\ref{s:intro}.

%-----------------------------------------------------------------------
\begin{proof} (Of Theorem~\ref{thm:cauchy-funk-polytope})
For each $v\in V(K)$, applying Lemma~\ref{lem:cone-boundary-integral} to the pair $(K_v,G-v)$ yields
\[
    2 \cdot \volFunk[K_v](G_v)
        ~ = ~ \frac{1}{\omega_{d-1}} \int_{\bd(G-v)} \lambda_{d-1} \left(\orthproj{F_{K_v}(x')}{T_{x'}}\right) \, d\lambda_{d-1}(x').
\]
Perform the change of variables $x = x' + v$. Then $\bd(G-v)$ maps to $\bd G$, and the tangent spaces are identified as $T_{x'} = T_x$, yielding
\[
    2 \cdot \volFunk[K_v](G_v)
        ~ = ~ \frac{1}{\omega_{d-1}} \int_{\bd G} \lambda_{d-1} \left(\orthproj{F_{K_v}(x-v)}{T_x}\right) \, d\lambda_{d-1}(x).
\]
Summing over $v \in V(K)$ and using linearity of integration, we have
\[
    2 \sum_{v\in V(K)} \volFunk[K_v](G_v)
        ~ = ~ \frac{1}{\omega_{d-1}}\int_{\bd G} \left[ \sum_{v\in V(K)}\lambda_{d-1} \left(\orthproj{F_{K_v}(x-v)}{T_x}\right) \right] \, d\lambda_{d-1}(x).
\]
By Lemma~\ref{lem:facets-polar}, the term in brackets is equal to $2 \, \lambda_{d-1}(\orthproj{\polar{(K-x)}}{T_{x}})$. Substituting this and dividing by $2$ yields
\[
    \sum_{v \in V(K)} \volFunk[K_v](G_v)
        ~ = ~ \frac{1}{\omega_{d-1}} \int_{\bd G} \lambda_{d-1} \left(\orthproj{\polar{(K-x)}}{T_x}\right) \, d\lambda_{d-1}(x).
\]
By Eq.~\eqref{eq:funk-area-element}, the right-hand side is exactly $\areaFunk[K](G)$.
\end{proof}
%-----------------------------------------------------------------------

%=======================================================================
\section{Cauchy-Type Formulas for General Convex Bodies} \label{s:cauchy-gen}
%=======================================================================

We now derive the general Cauchy formula from the polytope case. The key additional ingredient is a representation of the Funk volume of a cone as an average of the areas of its central shadows. Combined with Theorem~\ref{thm:cauchy-funk-polytope}, this yields the formula for polytopal $K$, and the general case then follows by approximation. 

%-----------------------------------------------------------------------
\begin{figure}[htbp]
  \centerline{\includegraphics[scale=0.40,page=2]{Figs/funk-shadows}}
  \caption{Lemma~\ref{lem:funk-shadows} and its proof.}
  \label{f:funk-shadows-2}
\end{figure}
%-----------------------------------------------------------------------

%-----------------------------------------------------------------------
\begin{lemma}[Funk Volume of a Cone] \label{lem:funk-shadows}
Let $K$ and $G$ be pointed cones in $\RE^d$ with $(G \setminus \{O\}) \subset \interior(K)$. Let $U = \polar{K} \cap S^{d-1}$ (the spherical image of the polar cone), and for $u \in U$, let $S(u) = -u^* \cap G$ (see Figure~\ref{f:funk-shadows-2}(a)). Then
\begin{equation}
    \volFunk[K](G)
        ~ = ~ \frac{1}{\omega_{d-1}} \int_{U} \lambda_{d-1}(S(u)) \, d\sigma(u). \label{eq:funk-shadows}
\end{equation}
\end{lemma}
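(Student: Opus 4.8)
The plan is to unfold both sides of~\eqref{eq:funk-shadows} into the \emph{same} symmetric double integral over $\Sigma(G)\times U$ and then conclude by Tonelli's theorem. By the definition~\eqref{eq:funk-vol-cone-def} we have $\volFunk[K](G)=\frac{1}{\omega_{d-1}}\int_{\Sigma(G)}\lambda_{d-1}(F_K(s))\,d\sigma(s)$ with $F_K(s)=\polar{K}\cap -s^*$, so it suffices to prove
\[
    \int_{\Sigma(G)}\lambda_{d-1}(F_K(s))\,d\sigma(s)
        ~ = ~ \int_{U}\lambda_{d-1}(S(u))\,d\sigma(u).
\]

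The first step is to identify each slice as a gnomonic image. Fix $s\in\Sigma(G)$; since $G\setminus\{O\}\subset\interior(K)$, the unit vector $s$ lies in $\interior(K)$, hence $\inner{s}{y}<0$ for every nonzero $y\in\polar{K}$, and in particular $U$ lies in the open hemisphere centered at $-s$. A short computation shows that the ray through any $u\in U$ meets the hyperplane $-s^*$ at $u/|\inner{s}{u}|=\mathcal{P}_{-s}(u)$, and that this map carries $U$ bijectively onto $F_K(s)$; thus $F_K(s)=\mathcal{P}_{-s}(U)$. Proposition~\ref{prop:gnomonic} then gives $\lambda_{d-1}(F_K(s))=\int_{U}|\inner{s}{u}|^{-d}\,d\sigma(u)$. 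Exchanging the roles of the two cones, the same argument shows that for $u\in U$ the slice $S(u)=-u^*\cap G$ equals $\mathcal{P}_{-u}(\Sigma(G))$, whence $\lambda_{d-1}(S(u))=\int_{\Sigma(G)}|\inner{s}{u}|^{-d}\,d\sigma(s)$. Substituting these two identities, each side of the identity to be proved becomes an iterated integral of the symmetric kernel $|\inner{s}{u}|^{-d}$ over $\Sigma(G)\times U$, in the two possible orders.

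It remains to interchange the order of integration. Since both cones are closed, $\Sigma(G)=S^{d-1}\cap G$ and $U=S^{d-1}\cap\polar{K}$ are compact, and on the compact product $\Sigma(G)\times U$ the continuous function $\inner{s}{u}$ is strictly negative (again because $\Sigma(G)\subset\interior(K)$ and $U\subset\polar{K}\setminus\{O\}$), hence bounded away from $0$; so $|\inner{s}{u}|^{-d}$ is bounded and integrable and Tonelli's theorem applies. Dividing by $\omega_{d-1}$ finishes the proof. The one step that requires genuine insight — everything else being routine bookkeeping — is recognizing $F_K(s)$ and $S(u)$ as gnomonic projections of $U$ and $\Sigma(G)$ respectively, since it is precisely this observation that makes the two sides collapse to a common symmetric integral; I expect this to be the main obstacle, together with verifying that the relevant caps sit inside the open hemispheres so that Proposition~\ref{prop:gnomonic} is applicable.
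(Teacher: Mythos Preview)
Your proposal is correct and follows essentially the same approach as the paper: both recognize $F_K(s)$ and $S(u)$ as gnomonic images $\mathcal{P}_{-s}(U)$ and $\mathcal{P}_{-u}(\Sigma(G))$, invoke Proposition~\ref{prop:gnomonic} to rewrite each slice measure as an integral of the symmetric kernel $|\inner{s}{u}|^{-d}$, and then apply Tonelli to swap the order of integration. Your additional observation that the kernel is actually bounded on the compact product $\Sigma(G)\times U$ is a nice refinement, though the paper simply appeals to nonnegativity for Tonelli.
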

%-----------------------------------------------------------------------

%-----------------------------------------------------------------------
\begin{proof}
Let $\Sigma = S^{d-1} \cap G$ and define the double integral 
\[
    I ~ = ~ \frac{1}{\omega_{d-1}} \int_{U} \int_{\Sigma} \lvert \inner{s}{u} \rvert^{-d} \, d\sigma(s) \, d\sigma(u).
\]
Since the integrand is nonnegative, by Tonelli's theorem~\cite[Theorem~8.8(a)]{Rud87}, we may integrate in either order. We will show that one order leads to the right-hand side of Eq.~\eqref{eq:funk-shadows}, and the other leads to the left-hand side.

First fix $u \in U$ and consider the inner integral in $s$. Since $u \in \polar{K}$ and $(G \setminus \{O\}) \subset \interior(K)$, we have $\inner{s}{u} < 0$ for all $s \in \Sigma$. 
Applying Lemma~\ref{lem:gnomonic} to the cone $G$ with center $u$, we obtain 
\[
    \int_{\Sigma} \lvert \inner{s}{u} \rvert^{-d} \, d\sigma(s) 
        ~ = ~ \lambda_{d-1}(-u^* \cap G)
        ~ = ~ \lambda_{d-1}(S(u))
\]
(see Figure~\ref{f:funk-shadows-2}(a)).
Substituting into $I$ yields the right-hand side of Eq.~\eqref{eq:funk-shadows}.

Next fix $s \in \Sigma$ and consider the inner integral in $u$. Since $s \in \Sigma \subset \interior(K)$ and $U = \polar{K} \cap S^{d-1}$, we have $\inner{u}{s} < 0$ for all $u \in U$. 
Applying Lemma~\ref{lem:gnomonic} to the cone $\polar{K}$ with center $s$, we obtain 
\[
    \int_{U} \lvert \inner{u}{s} \rvert^{-d} \, d\sigma(u) 
        ~ = ~ \lambda_{d-1}(-s^* \cap \polar{K})
        ~ = ~ \lambda_{d-1}(F_K(s))
\]
(see Figure~\ref{f:funk-shadows-2}(b)).
Substituting into $I$, we obtain
\[
    I 
        ~ = ~ \frac{1}{\omega_{d-1}} \int_{\Sigma} \lambda_{d-1}(F_K(s)) \, d\sigma(s).
\]
By the definition of Funk volume (Eq.~\eqref{eq:funk-vol-cone-def}), this expression is exactly $\volFunk[K](G)$, which matches the left-hand side of Eq.~\eqref{eq:funk-shadows}.
\end{proof}
%-----------------------------------------------------------------------

With the cone formula established, we now derive the Cauchy formula for arbitrary convex bodies. Let $K \subset \RE^d$ be a convex body and let $G \subset \interior(K)$ be a convex body. For almost every direction $u \in S^{d-1}$, the support set $K \cap H_K(u)$ is a singleton; denote its unique point by $v_K(u)$~\cite[Theorem~2.2.11]{Sch14}. On the exceptional null set, choose $v_K(u)$ arbitrarily in $K \cap H_K(u)$. For each $u \in S^{d-1}$, define the central shadow by 
\[
    S_K(G,u) ~ := ~ -u^* \cap \cone(G-v_K(u))
\]
(recall Figure~\ref{f:funk-shadows-1}). We are now ready to prove Theorem~\ref{thm:cauchy-funk-gen}.  

%-----------------------------------------------------------------------
\begin{proof}[Proof of Theorem~\ref{thm:cauchy-funk-gen}]
Assume first that $K$ is a convex polytope. For each vertex $v \in V(K)$, let
\[
    U_v ~ := ~ \polar{K_v} \cap S^{d-1}
\]
denote the spherical image of the normal cone at $v$. The collection $\{U_v\}_{v \in V(K)}$ forms a partition of $S^{d-1}$ up to a set of $\sigma$-measure zero. For almost every $u \in U_v$, the support point of $K$ in direction $u$ is unique and equals $v$. Recalling that $G_v = \cone(G-v)$, it follows that for almost every $u \in U_v$,
\[
    S_{K_v}(G_v,u) ~ = ~ S_K(G,u).
\]
Therefore, by Theorem~\ref{thm:cauchy-funk-polytope} and Lemma~\ref{lem:funk-shadows}, we have
\begin{align*}
    \areaFunk[K](G)
        & ~ = ~ \sum_{v \in V(K)} \volFunk[K_v](G_v)
          ~ = ~ \sum_{v \in V(K)} \frac{1}{\omega_{d-1}}
        \int_{U_v} \lambda_{d-1}(S_{K_v}(G_v,u)) \, d\sigma(u) \\
        & ~ = ~ \frac{1}{\omega_{d-1}} \sum_{v \in V(K)}
        \int_{U_v} \lambda_{d-1}(S_K(G,u)) \, d\sigma(u)
          ~ = ~ \frac{1}{\omega_{d-1}}
        \int_{S^{d-1}} \lambda_{d-1}(S_K(G,u)) \, d\sigma(u).
\end{align*}
This establishes the result when $K$ is a polytope.

For a general convex body $K$, let $K_m$ be a sequence of convex polytopes converging to $K$ in the Hausdorff metric. Since $G \subset \interior(K)$ and $G$ is compact, after discarding finitely many terms we may assume that $G \subset \interior(K_m)$ for all $m$, and that there exists $\delta > 0$ such that the Euclidean distance from $G$ to $\partial K_m$ is at least $\delta$ for every $m$. Moreover, because $K$ is compact and $K_m \to K$ in the Hausdorff metric, the sets $K_m$ are all contained in a fixed Euclidean ball.

By the polytopal case,
\[
    \areaFunk[K_m](G)
        ~ = ~ \frac{1}{\omega_{d-1}} \int_{S^{d-1}} \lambda_{d-1}(S_{K_m}(G,u))\,d\sigma(u).
\]
Choose arbitrary support points $v_{K_m}(u) \in K_m \cap H_{K_m}(u)$ and  $v_K(u) \in K \cap H_K(u)$. For almost every $u \in S^{d-1}$, the support set $K \cap H_K(u)$ is a singleton. Since $K_m \to K$ in the Hausdorff metric, it follows that $v_{K_m}(u) \to v_K(u)$ for all such $u$. Hence $S_{K_m}(G,u) \to S_K(G,u)$ in the Hausdorff metric for almost every $u$, and therefore
\[
    \lambda_{d-1}(S_{K_m}(G,u))
        ~ \to ~ \lambda_{d-1}(S_K(G,u)), \quad\text{for almost every $u \in S^{d-1}$}.
\]
The uniform separation of $G$ from $\bd K_m$, together with the fact that the $K_m$ lie in a common Euclidean ball, implies that the shadows $S_{K_m}(G,u)$ are all contained in Euclidean balls of some fixed radius in $-u^*$, uniformly over $m$ and $u$. Hence $\lambda_{d-1}(S_{K_m}(G,u))$ is dominated by a constant independent of $m$ and $u$, and by the Dominated Convergence Theorem~\cite{Rud87}, we have
\[
    \frac{1}{\omega_{d-1}} \int_{S^{d-1}} \lambda_{d-1}(S_{K_m}(G,u))\,d\sigma(u)
        ~\to~
    \frac{1}{\omega_{d-1}} \int_{S^{d-1}} \lambda_{d-1}(S_K(G,u))\,d\sigma(u).
\]

It remains to pass to the limit in $\areaFunk[K_m](G)$. By Eq.~\eqref{eq:funk-area-element},
\[
    \areaFunk[K_m](G)
        ~ = ~ \frac{1}{\omega_{d-1}} \int_{\bd G} \lambda_{d-1}\!\left( \orthproj{\polar{(K_m-x)}}{T_x} \right) \,d\lambda_{d-1}(x).
\]
For each $x \in \bd G$, the uniform separation of $G$ from $\bd K_m$ implies that $K_m-x$ contains a fixed Euclidean ball about the origin, uniformly in $m$ and $x$. Hence, the projected polars in Eq.~\eqref{eq:funk-area-element} are uniformly contained in Euclidean balls of some fixed radius in $T_x$. Since $(K_m-x) \to (K-x)$ for each $x \in \bd G$, and all these sets contain a fixed Euclidean ball about the origin, polarity is continuous with respect to the Hausdorff metric on this family. Therefore the sets $\orthproj{\polar{(K_m-x)}}{T_x}$ converge in the Hausdorff metric to $\orthproj{\polar{(K-x)}}{T_x}$, and hence their $(d-1)$-dimensional volumes converge pointwise in $x$. Another application of the Dominated Convergence Theorem~\cite{Rud87} gives 
\[
    \areaFunk[K_m](G) ~ \to ~ \areaFunk[K](G).
\]
Combining these limits with the polytopal identity for $K_m$ completes the proof.
\end{proof}
%----------------------------------------------------------------------- 

From a computational perspective, when $K$ is a polytope, the preceding proof yields a natural Monte-Carlo estimator. The identity
\[
    \areaFunk[K](G)
    ~ = ~ \sum_{v \in V(K)} \frac{1}{\omega_{d-1}} \int_{U_v} \lambda_{d-1}(S_{K_v}(G_v, u)) \, d\sigma(u)
\]
decomposes the surface area into local contributions indexed by the vertices of $K$. The sampling distribution depends only on the spherical normal-cone decomposition $\{U_v\}_{v \in V(K)}$, while the sampled quantity is the $(d-1)$-dimensional volume of the corresponding central shadow. Thus, one may sample a vertex $v$ with probability proportional to $\sigma(U_v)$ and then sample a direction $u$ uniformly from $U_v$, for example, by triangulating the normal cone into simplicial cones. This yields an unbiased estimator whose evaluation uses only geometric information local to the chosen vertex. In particular, it avoids direct use of the Holmes--Thompson definition through Eq.~\eqref{eq:funk-area-element}, which would require integrating
\[
    \lambda_{d-1}\!\left(\orthproj{\polar{(K-x)}}{T_x}\right)
\]
over all $x \in \bd G$. 

\bigskip

Using Lemma~\ref{lem:gnomonic}, we can rewrite the shadow integral in Theorem~\ref{thm:cauchy-funk-gen} in terms of the spherical cross-sections of the subtended cones. For any $v \in \bd K$, let
\[
    \Sigma(G,v) ~ := ~ S^{d-1} \cap \cone(G - v)
\]
denote the spherical cross-section of the cone subtended by $G$ at $v$. The following corollary gives a double-integral representation of the Funk surface area. 

%-----------------------------------------------------------------------
\begin{corollary}[Double-Integral Formula for Funk Area] \label{cor:funk-double-int} 
Let $G$ and $K$ be convex bodies in $\RE^d$ with $G \subset \interior(K)$. Then
\[
    \areaFunk[K](G)
        ~ = ~ \frac{1}{\omega_{d-1}} \int_{S^{d-1}} \int_{\Sigma(G,v_K(u))} \lvert \inner{s}{u} \rvert^{-d} \, d\sigma(s) \, d\sigma(u).
\]
\end{corollary}
%-----------------------------------------------------------------------

%-----------------------------------------------------------------------
\begin{proof}
By Theorem~\ref{thm:cauchy-funk-gen},
\[
    \areaFunk[K](G)
        ~ = ~ \frac{1}{\omega_{d-1}} \int_{S^{d-1}} \lambda_{d-1}(S_K(G,u)) \, d\sigma(u).
\]
For each $u \in S^{d-1}$,
\[
    S_K(G,u) ~ = ~ -u^* \cap \cone(G - v_K(u)).
\]
Since $v_K(u)$ is a support point of $K$ in direction $u$ and $G \subset \interior(K)$,  we have $\inner{s}{u} < 0$ for all $s \in \Sigma(G,v_K(u))$. Applying Lemma~\ref{lem:gnomonic} with $C=\cone(G - v_K(u))$ and center $u$, we obtain 
\[
    \lambda_{d-1}(S_K(G,u))
        ~ = ~ \int_{\Sigma(G,v_K(u))} \lvert \inner{s}{u} \rvert^{-d}\, d\sigma(s).
\]
Substituting this into the preceding formula completes the proof. 
\end{proof}
%-----------------------------------------------------------------------

We now recast Corollary~\ref{cor:funk-double-int} as a Crofton formula on oriented lines. For each $u \in S^{d-1}$, choose $v_K(u) \in K \cap H_K(u)$ as above, and let
\[
    \mathcal P ~ := ~ \{(u,s)\in S^{d-1}\times S^{d-1} \ST \inner{s}{u}<0\}.
\]
For $(u,s)\in \mathcal P$, let $\ell_K^+(u,s)$ denote the oriented line through $v_K(u)$ in direction $s$.
For fixed $u \in S^{d-1}$, a direction $s \in S^{d-1}$ with $\inner{s}{u}<0$ belongs to $\Sigma(G,v_K(u))$ if and only if the ray from $v_K(u)$ in direction $s$ meets $G$. Thus Corollary~\ref{cor:funk-double-int}
can be rewritten as follows.

%-----------------------------------------------------------------------
\begin{lemma}[Funk--Crofton formula for oriented lines] \label{lem:crofton-oriented}
Let $G$ and $K$ be convex bodies in $\RE^d$ with $G \subset \interior(K)$. Then
\[
    \areaFunk[K](G)
        ~ = ~ \frac{1}{\omega_{d-1}} \int_{\mathcal P} \mathbf 1_{\{\ell_K^+(u,s)\cap G \neq \emptyset\}} \, |\inner{s}{u}|^{-d} \, d\sigma(s)\,d\sigma(u).
\]
\end{lemma}
%-----------------------------------------------------------------------

Replacing each oriented line by its underlying unoriented line, we obtain the following $K$-dependent parameter-space Crofton formula.

%-----------------------------------------------------------------------
\begin{theorem}[Funk--Crofton formula for unoriented lines] \label{thm:crofton-unoriented}
Let $G$ and $K$ be convex bodies in $\RE^d$ with $G \subset \interior(K)$. Then
\[
    \areaFunk[K](G)
        ~ = ~ \frac{1}{\omega_{d-1}} \int_{\mathcal P} \mathbf 1_{\{L_K(u,s)\cap G \neq \emptyset\}} \, |\inner{s}{u}|^{-d} \, d\sigma(s)\,d\sigma(u),
\]
where $L_K(u,s)$ denotes the unoriented line underlying $\ell_K^+(u,s)$.
\end{theorem}
%-----------------------------------------------------------------------

%-----------------------------------------------------------------------
\begin{proof}
Since intersection with $G$ is independent of orientation, the indicator in Lemma~\ref{lem:crofton-oriented} is unchanged when $\ell_K^+(u,s)$ is replaced by its underlying unoriented line $L_K(u,s)$.
\end{proof}
%-----------------------------------------------------------------------

%=======================================================================
\section{Connections to Other Geometries}  \label{s:other-geometries}
%=======================================================================

Our results on the Funk geometry can be applied to yield Cauchy-type formulas for other geometries as well. In this section, we explore some examples.

%=======================================================================
\subsection{Hilbert and Hyperbolic Geometries} \label{s:hilbert}
%=======================================================================

Let us first consider implications for the Hilbert geometry. The Holmes--Thompson surface area in the Hilbert geometry is generally difficult to compute because the Hilbert density involves the difference body of the polar set. In contrast, our Funk--Cauchy formula (Theorem~\ref{thm:cauchy-funk-gen}) relies solely on Euclidean shadows.

The following corollary shows that our formula gives a dimension-dependent approximation to the Hilbert area in general, and an \emph{exact} formula in two important cases: planar geometries and the Beltrami-Klein model of hyperbolic geometry in any dimension.

%-----------------------------------------------------------------------
\begin{corollary} \label{cor:hilbert-exact}
Let $G \subset \interior(K)$ be convex bodies in $\RE^d$.
\begin{enumerate}
\item[$(i)$] \emph{(General Approximation)} The Funk surface area approximates the Hilbert surface area within a constant factor depending only on dimension:
\[
    \areaFunk[K](G) 
        ~ \leq ~ \areaHilb[K](G) 
        ~ \leq ~ \beta(d-1) \cdot \areaFunk[K](G),
\]
where $\beta(d) := \binom{2d}{d}/2^d$.

\item[$(ii)$] \emph{(Planar Exactness)} If $d=2$, the measures coincide, that is,
\[
    \areaFunk[K](G) 
        ~ = ~ \areaHilb[K](G).
\]
\item[$(iii)$] \emph{(Ellipsoidal Exactness)} If $K$ is an ellipsoid, the measures coincide in any dimension:
\[
    \areaFunk[K](G) 
        ~ = ~ \areaHilb[K](G).
\]
This implies that exactness holds in the Beltrami-Klein model of hyperbolic geometry.
\end{enumerate}
\end{corollary}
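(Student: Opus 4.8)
The plan is to reduce all three parts to a single pointwise comparison of the Funk and Hilbert surface‑area \emph{elements}; parts $(i)$ and $(ii)$ then follow from the Rogers--Shephard and Brunn--Minkowski inequalities, and part $(iii)$ from the equality case of Brunn--Minkowski together with a short computation about ellipsoids. Fix $x\in\bd G$ at which the tangent hyperplane $T_x$ exists (true for $\lambda_{d-1}$‑almost every $x\in\bd G$) and abbreviate $P_x:=\orthproj{\polar{(K-x)}}{T_x}$. By Eq.~\eqref{eq:funk-area-element}, $d\areaFunk[K](x)=\tfrac1{\omega_{d-1}}\lambda_{d-1}(P_x)\,d\lambda_{d-1}(x)$. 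For the Hilbert element, apply Lemma~\ref{lem:slice} to the centrally symmetric Finsler ball $\ballHilb[K](x)$ to get $\polarIn[T_x]{\big(\ballHilb[K](x)\cap T_x\big)}=\orthproj{\polar{{\ballHilb[K](x)}}}{T_x}$; rewrite $\polar{{\ballHilb[K](x)}}=\tfrac12\Delta\big(\polar{(K-x)}\big)$ by Lemma~\ref{lem:hilbert-ball}; and note that orthogonal projection onto $T_x$ is linear, hence commutes with Minkowski sums and with negation, so $\orthproj{\tfrac12\Delta(\polar{(K-x)})}{T_x}=\tfrac12\Delta(P_x)$. Thus $d\areaHilb[K](x)=\tfrac1{\omega_{d-1}}\lambda_{d-1}\big(\tfrac12\Delta(P_x)\big)\,d\lambda_{d-1}(x)$, and the whole corollary reduces to comparing $\lambda_{d-1}(\tfrac12\Delta(P_x))$ with $\lambda_{d-1}(P_x)$ and integrating over $\bd G$.

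\emph{Parts $(i)$ and $(ii)$.} Since $\lambda_{d-1}(\tfrac12\Delta(P_x))=2^{-(d-1)}\lambda_{d-1}\big(P_x\oplus(-P_x)\big)$, the Brunn--Minkowski inequality gives $\lambda_{d-1}(\tfrac12\Delta(P_x))\ge\lambda_{d-1}(P_x)$ and the Rogers--Shephard inequality~\cite{Sch14} gives $\lambda_{d-1}(\tfrac12\Delta(P_x))\le 2^{-(d-1)}\binom{2(d-1)}{d-1}\lambda_{d-1}(P_x)=\beta(d-1)\lambda_{d-1}(P_x)$; integrating over $\bd G$ yields part $(i)$ (which is also precisely the second line of Lemma~\ref{lem:mod-def}). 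For part $(ii)$, when $d=2$ the set $P_x$ is a segment in the line $T_x$, hence centrally symmetric about its midpoint; for any centrally symmetric $P_x=c+S$ with $S=-S$ one has $\tfrac12\Delta(P_x)=S$, so $\lambda_{d-1}(\tfrac12\Delta(P_x))=\lambda_{d-1}(P_x)$, the two area elements agree almost everywhere on $\bd G$, and therefore $\areaFunk[K](G)=\areaHilb[K](G)$.

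\emph{Part $(iii)$.} By the equality case of Brunn--Minkowski~\cite{Sch14}, $\lambda_{d-1}(\tfrac12\Delta(P_x))=\lambda_{d-1}(P_x)$ if and only if $P_x$ is a translate of a centrally symmetric body, so it suffices to prove that $P_x=\orthproj{\polar{(K-x)}}{T_x}$ is centrally symmetric whenever $K$ is an ellipsoid. Write $K=\{z:\inner{M(z-c)}{z-c}\le 1\}$ with $M$ symmetric positive definite and center $c$, and set $a:=c-x$, so that $\inner{Ma}{a}<1$ because $x\in\interior(K)$. The support value of the ellipsoid $K-x$ (centered at $a$) in a direction $y$ is $\inner{a}{y}+\sqrt{\inner{M^{-1}y}{y}}$, hence $\polar{(K-x)}=\{y:\inner{a}{y}+\sqrt{\inner{M^{-1}y}{y}}\le1\}$ is the solid region whose boundary lies on the quadric $\inner{M^{-1}y}{y}-\inner{a}{y}^2+2\inner{a}{y}-1=0$. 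By Cauchy--Schwarz, $\inner{a}{y}^2\le\inner{Ma}{a}\,\inner{M^{-1}y}{y}<\inner{M^{-1}y}{y}$ for $y\ne O$, so the quadratic form $y\mapsto\inner{M^{-1}y}{y}-\inner{a}{y}^2$ is positive definite, and therefore $\polar{(K-x)}$ is an ellipsoid. Since the orthogonal projection of an ellipsoid onto a subspace is again an ellipsoid (a Schur‑complement computation), $P_x$ is an ellipsoid, hence centrally symmetric, and $(iii)$ follows. Concretely, for $K=B^d$ (the Beltrami--Klein model) this says $\polar{(B^d-x)}=\{y:\|y\|^2-\inner{x}{y}^2-2\inner{x}{y}-1\le0\}$, the solid ellipsoid centered at $x/(1-\|x\|^2)$.

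\emph{The main obstacle.} Everything above is formal manipulation via Lemmas~\ref{lem:slice} and~\ref{lem:hilbert-ball} and the classical Brunn--Minkowski and Rogers--Shephard inequalities, except for the point in Part $(iii)$ that $\polar{(K-x)}$ — which is neither centrally symmetric nor centered at $O$ — nevertheless has an \emph{ellipsoidal}, hence centrally symmetric, orthogonal shadow onto $T_x$. That is where the special structure of ellipsoids is genuinely used, and it explains why the Funk and Hilbert surface areas coincide for ellipsoidal $K$ but only up to the dimensional constant $\beta(d-1)$ in general.
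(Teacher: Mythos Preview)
Your proof is correct and follows essentially the same line as the paper's: both reduce to a pointwise comparison of the area elements via Lemma~\ref{lem:hilbert-ball} and projection--section duality, and both rely on the fact that the polar of a translated ellipsoid is again an ellipsoid for part~$(iii)$. The only real difference is packaging: the paper obtains~$(i)$ and~$(ii)$ by quoting Faifman's Lemma~\ref{lem:mod-def} as a black box (and noting $\beta(1)=1$), whereas you unpack that lemma and derive the bounds directly from Brunn--Minkowski and Rogers--Shephard; for~$(iii)$ the paper symmetrizes first (using that $\Delta(E)$ is a translate of $2E$ for an ellipsoid $E$) and then projects, while you project first and then observe that the resulting ellipsoid $P_x$ is centrally symmetric---the two orders yield the same conclusion for the same reason.
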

%-----------------------------------------------------------------------

\begin{proof}
Part (i) follows directly from the comparison bounds in Lemma~\ref{lem:mod-def}. In the special case of $\RE^2$, we have $\beta(d-1) = \beta(1) = 1$. Thus, the upper and lower bounds match, which implies~(ii).

For Part (iii), let $K$ be an ellipsoid. For any $x \in \interior(K)$, the set $K-x$ is again an ellipsoid, with $O \in \interior(K-x)$. Hence, its polar $\polar{(K-x)}$ is also an ellipsoid, although in general it is not centered at the origin~\cite[Section 2.1]{KuV06}. By Lemma~\ref{lem:hilbert-ball},
\[
    \polar{{\ballHilb[K](x)}} 
        ~ = ~ \tfrac{1}{2} \Delta(\polar{(K-x)}).
\]

Now let $E$ be any ellipsoid and let $B^d$ denote the Euclidean unit ball in $\RE^d$. We first observe that $\tfrac{1}{2} \Delta(E)$ is a translate of $E$. We can write $E = c + A B^d$, where $c \in \RE^d$ and $A$ is invertible, from which we have
\[
    \Delta(E)
        ~ = ~ E + (-E)
        ~ = ~ (c + A B^d) + (-c + A B^d)
        ~ = ~ 2 A B^d
        ~ = ~ 2 E - 2 c.
\]
Thus $\Delta(E)$ is a translate of $2 E$, and therefore $\tfrac12\Delta(E)$ is a translate of $E$. Applying this to $E = \polar{(K-x)}$, we conclude that $\polar{{\ballHilb[K](x)}}$ is a translate of $\polar{{\ballFunk[K](x)}} = \polar{(K-x)}$. Since the Holmes--Thompson surface area density is determined by the volume of the orthogonal projection of the polar unit ball, and volume is translation-invariant, the corresponding densities agree:
\[
    d \areaHilb[K](x) ~ = ~ d \areaFunk[K](x),
\]
and hence, the associated surface areas coincide.
\end{proof}

This exactness explains and generalizes previous observations in the literature. Alexander \etal~\cite{ABF05} noted that for the Beltrami--Klein model of hyperbolic geometry in the plane, the Hilbert perimeter is the average length of central shadows. The underlying reason is that the shadow principle belongs naturally to the Funk geometry. By Theorem~\ref{thm:cauchy-funk-gen}, the Funk surface area is always given exactly by an average of central shadows in every dimension. Thus, the exact Hilbert shadow formulas in these cases are not isolated phenomena, but instances of the general Funk shadow formula in situations where the two geometries agree on surface area.

%=======================================================================
\subsection{Minkowski Geometry} \label{s:minkowski}
%=======================================================================

Our results also have implications for the Holmes--Thompson surface area in Minkowski geometry. Consider the Minkowski geometry induced by a convex body $K$ in $\RE^d$ containing the origin in its interior, and let $\|\cdot\|_K$ denote the associated Minkowski functional (or gauge). We denote the corresponding Holmes--Thompson surface area of $G$ by $\areaMink[K](G)$. Recall that the Holmes--Thompson surface area element in the Minkowski geometry induced by $K$ is defined analogously to the Funk and Hilbert cases, using the polar of the local unit ball. In the Minkowski setting, the unit ball is translation-invariant, that is, for every $x$, the local ball is the translate $x+K$. Its polar with respect to the local center is simply $\polar{K}$. Thus, the Holmes--Thompson surface area element is translation-invariant. For any $x \in \bd G$, let $T_x$ denote the tangent hyperplane at $x$. (We may ignore the set of measure zero where $T_x$ is not unique.) Recalling that $\orthproj{\polar{K}}{T_x}$ denotes the orthogonal projection of $\polar{K}$ onto $T_x$, the Minkowski Holmes--Thompson surface area element is
\[
    d\areaMink[K](x) 
        ~ = ~ \frac{1}{\omega_{d-1}} \lambda_{d-1}\left( \orthproj{\polar{K}}{T_x} \right) \, d\lambda_{d-1}(x)
\]
(see, e.g.,~\cite{Tho96}). As in Section~\ref{s:funk-hilbert}, the surface area of $G$ is defined by integrating this area element over $\bd G$.

Recall that in the Funk geometry, the fundamental geometric object in our Cauchy formula is the \emph{central shadow} $S_K(G, u)$, formed by projecting $G$ from a boundary point $v_K(u)$ onto the hyperplane $u^\perp$. Recall that $v_K(u) \in \partial K$ is a support point in direction $u$, so that $\inner{v_K(u)}{u}=h_K(u)$. The Minkowski case can be modeled by considering the limit as we scale $K$ up so that its boundary recedes to infinity. Central projections tend to parallel projections in the limit. To stress this structural parallel, we define the \emph{Minkowski shadow} of $G$ in direction $u$, denoted $S^M_K(G, u)$, to be the parallel projection of $G$ onto $u^\perp$ along the direction $v_K(u)$, followed by a scaling by a factor of $1/h_K(u)$ (see Figure~\ref{f:minkowski-shadows-1}). One way to interpret this is as the set of vectors $x \in u^\perp$ such that, after scaling by $h_K(u)$ and adding an appropriate scaled copy of $v_K(u)$, the resulting point lies in $G$. This leads to the definition:
\[
    S^M_K(G, u) 
        ~ := ~ \{ x \in u^\perp : \{h_K(u) \, x + t \, v_K(u) : t \in \RE \} \cap G \neq \emptyset \}
\]
Observe that since $O \in \interior(K)$, $h_K(u) > 0$ for all $u \in S^{d-1}$. Another way to express this is to recall from linear algebra that the projection of a point $g$ onto $u^\perp$ along direction $v$ is given by $g - \frac{\inner{g}{u}}{\inner{v}{u}} v$. Setting $v = v_K(u)$ and observing that $h_K(u) = \inner{v_K(u)}{u}$, we have $S^{\mathrm M}_K(G,u) = \Pi_u(G)$, where
\[
    \Pi_u(G) ~ := ~ \{\Pi_u(g):g\in G\} \quad\text{and}\quad
    \Pi_u(g) ~ := ~ \frac{1}{h_K(u)}\left(g-\frac{\inner{g}{u}}{h_K(u)}\,v_K(u)\right).
\]

%-----------------------------------------------------------------------
\begin{figure}[htbp]
  \centerline{\includegraphics[scale=0.40,page=1]{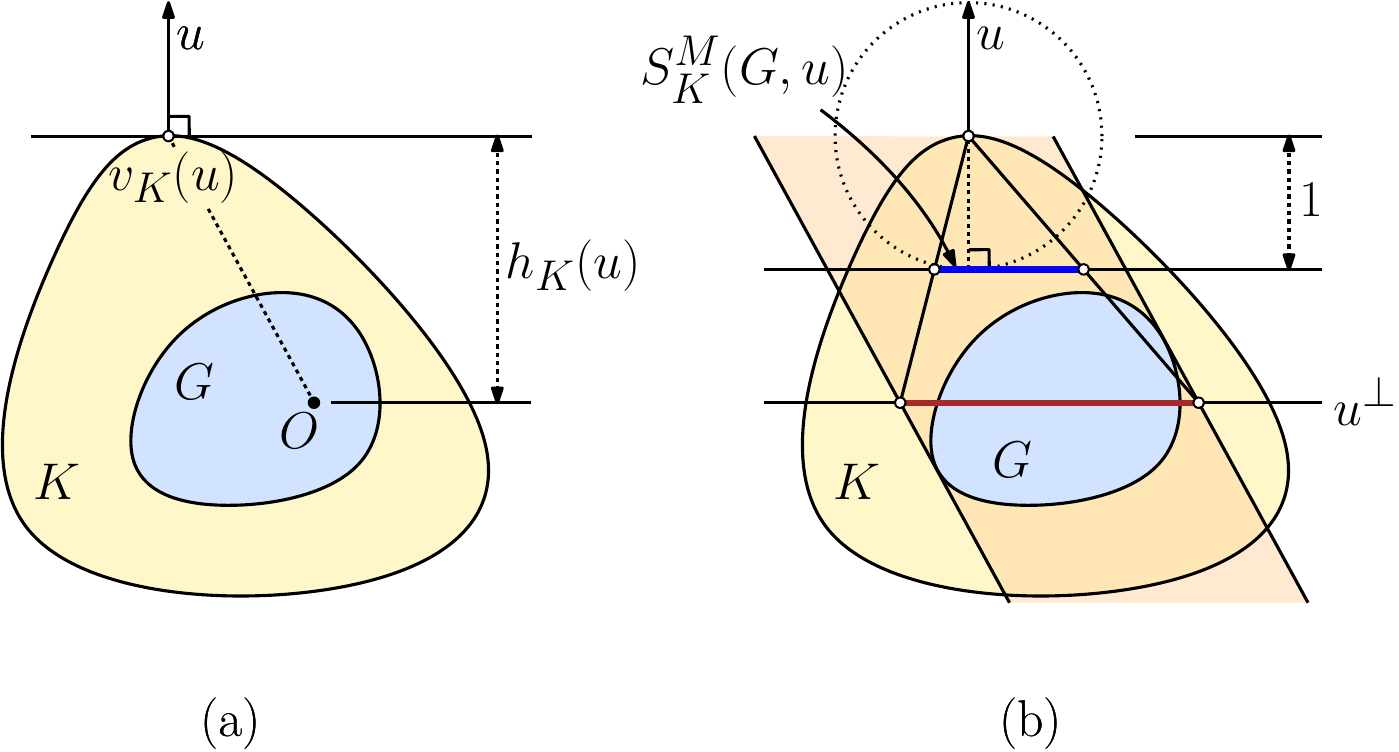}}
  \caption{(a) A body $G$ in the Minkowski geometry of $K$, and (b) the Minkowski shadow $S^M_K(G, u)$. (To better illustrate the analogy with central shadows, we have scaled and translated $G$ so it lies within $K$, but this need not be the case in general. We have also translated $S^M_K(G, u)$.)}
  \label{f:minkowski-shadows-1}
\end{figure}
%-----------------------------------------------------------------------

%-----------------------------------------------------------------------
\begin{lemma}[Minkowski--Cauchy Formula] \label{lem:minkowski-cauchy}
Let $K$ be a convex body in $\RE^d$ containing the origin in its interior. For any convex body $G$, its Holmes--Thompson surface area with respect to the gauge $\|\cdot\|_K$ is given by
\[
    \areaMink[K](G) 
        ~ = ~ 
        \frac{1}{\omega_{d-1}} \int_{S^{d-1}} \lambda_{d-1}\left( S^M_K(G, u) \right) \, d\sigma(u).
\]
\end{lemma}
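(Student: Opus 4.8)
The plan is to derive the Minkowski–Cauchy formula as a limiting case of the Funk–Cauchy formula (Theorem~\ref{thm:cauchy-funk-gen}) by dilating $K$. Specifically, for $\lambda > 0$, consider the scaled body $\lambda K$. For $\lambda$ large enough, $G \subset \interior(\lambda K)$, so Theorem~\ref{thm:cauchy-funk-gen} applies and gives
\[
    \areaFunk[\lambda K](G)
        ~ = ~ \frac{1}{\omega_{d-1}} \int_{S^{d-1}} \lambda_{d-1}(S_{\lambda K}(G,u)) \, d\sigma(u).
\]
First I would show that the left-hand side converges to $\areaMink[K](G)$ as $\lambda \to \infty$. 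This is a pointwise statement about the surface-area element: comparing Eq.~\eqref{eq:funk-area-element} with the Minkowski area element, it suffices to show that $\orthproj{\polar{(\lambda K - x)}}{T_x}$ converges (in $(d-1)$-measure) to $\orthproj{\polar{K}}{T_x}$ after the appropriate normalization. Here one must be careful about the scaling conventions: $\polar{(\lambda K - x)}$ is a scaled-and-translated copy of $\polar{K}$, and as $\lambda \to \infty$ the translation $x$ becomes negligible relative to the $1/\lambda$-scaling of $\polar{K}$; the cleanest route is to observe $\mu_{\lambda K}(\cdot) \to \mu_K(\cdot)$ and note that the Funk Finsler ball $\lambda K - x$, rescaled to unit size, converges to the Minkowski ball $K$, so the polars converge and hence so do their projected volumes. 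Dominated convergence over $\bd G$ (with a crude uniform bound on the projected polar, available since $G$ is compact and bounded away from the relevant boundaries) then gives $\areaFunk[\lambda K](G) \to \areaMink[K](G)$.

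The second step is to identify the limit of the integrand on the right. Here I would show that, for each fixed $u$ (outside a null set), the rescaled central shadow converges to the Minkowski shadow: $\lambda_{d-1}(S_{\lambda K}(G,u)) \to \lambda_{d-1}(S^M_K(G,u))$. The support point of $\lambda K$ in direction $u$ is $v_{\lambda K}(u) = \lambda\, v_K(u)$, so the central shadow $S_{\lambda K}(G,u) = -u^* \cap \cone(G - \lambda v_K(u))$ is the slice of the cone subtended by $G$ from the far-away point $\lambda v_K(u)$. As $\lambda \to \infty$, this cone's generators become asymptotically parallel to the direction $v_K(u)$, so the slice converges to the parallel projection of $G$ along $v_K(u)$ onto an appropriate hyperplane; the hyperplane $-u^*$ is at distance $1/\|u\| = 1$ from the origin in direction $-u$, which after tracking the $h_K(u) = \inner{v_K(u)}{u}$ factor in the gnomonic/central-projection scaling produces exactly the normalization by $1/h_K(u)$ built into the definition of $S^M_K(G,u)$. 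Concretely, a point $g \in G$ projects under the central map from $\lambda v_K(u)$ to the point $-u^* \cap \{\lambda v_K(u) + t(g - \lambda v_K(u)) : t \geq 0\}$, and a direct computation of this intersection point, followed by taking $\lambda \to \infty$, yields $\Pi_u(g)$; this is the main computational content. Again a uniform domination argument (the separation of $G$ from $\partial(\lambda K)$ is only improving as $\lambda$ grows, so the shadows stay inside a fixed ball in $-u^*$) lets me pass the limit through the outer integral via dominated convergence.

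The main obstacle I anticipate is \emph{bookkeeping the scaling factors consistently} rather than any deep analytic difficulty: the definition of $S^M_K(G,u)$ packages together a parallel projection along $v_K(u)$ \emph{and} a dilation by $1/h_K(u)$, and one must verify that precisely this combination emerges from the $\lambda \to \infty$ limit of the central construction, with no stray factor of $\lambda$, $h_K(u)$, or $\|v_K(u)\|$ left over. The safest way to make this airtight is to compute the central-projection map $g \mapsto -u^* \cap \cone(g - \lambda v_K(u))$ explicitly as an affine-fractional expression in $g$ (using $\inner{v_K(u)}{u} = h_K(u)$), check that its limit as $\lambda\to\infty$ is exactly the affine map $\Pi_u$ from the statement, and invoke continuity of $(d-1)$-dimensional Lebesgue measure under this convergence of convex sets. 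A secondary, minor point to address is the measure-zero set of directions $u$ where $v_K(u)$ is not unique: one fixes a Borel selection as in the statement of Theorem~\ref{thm:cauchy-funk-gen}, and since both sides of the claimed identity are insensitive to changes on null sets, this causes no trouble.
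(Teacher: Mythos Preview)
Your approach is essentially the paper's: dilate $K$ to $\lambda K$, apply the Funk--Cauchy formula, and pass to the limit, with the key computation being the explicit central-projection map and its limit $\Pi_u$. However, the convergences you write down are literally false, and this is exactly the scaling-bookkeeping issue you flagged. As $\lambda\to\infty$, the polar $\polar{(\lambda K-x)}$ shrinks like $\lambda^{-1}$, so $\areaFunk[\lambda K](G)\to 0$, not $\areaMink[K](G)$; likewise the central shadow $S_{\lambda K}(G,u)$ collapses to a point (every $g\in G$ projects to $-v_K(u)/h_K(u)+O(1/\lambda)$), so $\lambda_{d-1}(S_{\lambda K}(G,u))\to 0$, not $\lambda_{d-1}(S^M_K(G,u))$. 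Also, $\mu_{\lambda K}=\lambda^{-1}\mu_K\to 0$, not $\mu_K$.

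The fix, which is what the paper does, is to multiply both sides of the Funk--Cauchy identity by $\lambda^{d-1}$ \emph{before} taking the limit. Then $\lambda^{d-1}\areaFunk[\lambda K](G)\to\areaMink[K](G)$ on the left, and on the right one shows that the rescaled shadow $\lambda\cdot(\text{translate of }S_{\lambda K}(G,u))$ converges in Hausdorff distance to $S^M_K(G,u)=\Pi_u(G)$ at rate $O(1/\lambda)$, so $\lambda^{d-1}\lambda_{d-1}(S_{\lambda K}(G,u))\to\lambda_{d-1}(S^M_K(G,u))$. Your explicit-computation plan for the central-projection map is exactly how the paper carries this out; you just need to insert the $\lambda^{d-1}$ consistently on both sides.
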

%-----------------------------------------------------------------------

%-----------------------------------------------------------------------
\begin{proof}
We derive this result as the limit of the Funk--Cauchy formula (Theorem~\ref{thm:cauchy-funk-gen}). Consider the scaled bodies $K_r := r K$. For each $x \in \bd G$, the Funk surface-area density associated with $K_r$ is determined by $\polar{(K_r-x)} = r^{-1}\polar{(K-x/r)}$. Since $x/r \to O$ uniformly on the compact set $\bd G$, continuity of polarity and orthogonal projection for convex bodies implies that the corresponding density is $r^{-(d-1)}$ times the Minkowski density, up to a factor of $1+o(1)$ uniform on $\bd G$. Consequently,
\[
    \areaFunk[K_r](G)
        ~ = ~ r^{-(d-1)} \cdot \areaMink[K](G)\,(1+o(1)),
\]
where $o(1)$ denotes a scalar tending to $0$ as $r \to \infty$. 

Fix $u\in S^{d-1}$ and any $r$ that is sufficiently large that $G \subset \interior(K_r)$. For the sake of illustration, let us think of $u$ as directed vertically upwards (see Figure~\ref{f:minkowski-shadows-2}(a)). On the right-hand side of Theorem~\ref{thm:cauchy-funk-gen}, the central shadow $S_{K_r}(G,u)$ lies on the hyperplane $-u^* = \{y \ST \inner{y}{u} = -1\}$. To simplify notation, let
\[
    h    ~ := ~ h_K(u) \quad\text{and}\quad
    v    ~ := ~ v_K(u).
\]
Since $v$ is a support point, we have $\inner{v}{u} = h$. We can identify points of $-u^*$ with $u^\perp$ through parallel projection along $v$ by defining $\tau(y) := y + (1/h) v$.

%-----------------------------------------------------------------------
\begin{figure}[htbp]
  \centerline{\includegraphics[scale=0.40,page=2]{Figs/minkowski-shadows}}
  \caption{Proof of Lemma~\ref{lem:minkowski-cauchy}. (The free vectors $z_r(g)$, $r \kern+1pt z_r(g)$, and $\Pi_u(g)$, which lie on $u^\perp$, have been translated to better illustrate their lengths.)}
  \label{f:minkowski-shadows-2}
\end{figure}
%-----------------------------------------------------------------------

For $g \in G$, consider where the ray from $r v$ through $g$ intersects a horizontal hyperplane that lies one unit below $r v$. To relate this to our definition of shadows, let us translate by $-r v$, and consider the ray from the origin through $g - r v$, i.e., $\{t(g - r v) \ST t \geq 0\}$ instead. It is easily verified that its intersection with $-u^*$ occurs at the point
\[
    y_r(g) ~ := ~ t_r(g) (g - r v),
    \quad\text{where}~~
    t_r(g) ~ := ~ \frac{1}{r h - \inner{g}{u}}.
\]
Let $z_r(g) := \tau(y_r(g))$ be the associated point on $u^\perp$. (Figure~\ref{f:minkowski-shadows-2}(b) illustrates $z_r(g)$ as a vector lying on the horizontal hyperplane one unit below $r v$.)

A direct expansion using the boundedness of $G$ and $K$ yields
\begin{equation}
    r \kern+1pt z_r(g)
        ~ = ~ \frac{1}{h} \left( g - \frac{\inner{g}{u}}{h} v \right) + O\left( \frac{1}{r} \right), \label{eq:minkowski-cauchy-1}
\end{equation}
where the $O(1/r)$ term is uniform for all $g \in G$ and $u \in S^{d-1}$. (Figure~\ref{f:minkowski-shadows-2}(b) illustrates these two vectors as lying on the horizontal hyperplane $r$ units below $r v$.) Observe that the leading term of Eq.~\eqref{eq:minkowski-cauchy-1} is exactly $\Pi_u(g)$, hence the convex bodies
\[
    A_r(u)
        ~ := ~ r \left(\tau(S_{K_r}(G,u))\right)
        ~ \subseteq ~ u^\perp
    \quad\text{and}\quad
    S^{\mathrm M}_K(G,u)
        ~ =~ \Pi_u(G)
\]
converge in Hausdorff distance at rate $O(1/r)$. Since volume is continuous under Hausdorff convergence for convex bodies, we obtain
\[
    \lambda_{d-1}(A_r(u)) 
        ~ \to ~ \lambda_{d-1} \left(S^{\mathrm M}_K(G,u)\right).
\]

Finally, $\tau$ and translations preserve $(d-1)$-volumes on the hyperplane, and scaling by $r$ multiplies $(d-1)$-volume by $r^{d-1}$, so
\[
    \lambda_{d-1}(A_r(u))
        ~ = ~ r^{d-1} \lambda_{d-1}(S_{K_r}(G,u)).
\]
Substituting into the Funk--Cauchy formula, multiplying by $r^{d-1}$, and letting $r \to \infty$ yields
\[
    \areaMink[K](G)
        ~ = ~ \frac{1}{\omega_{d-1}} \int_{S^{d-1}} \lambda_{d-1} \left(S^{\mathrm M}_K(G,u)\right) \, d\sigma(u),
\]
as desired.
\end{proof}
%-----------------------------------------------------------------------

Observe that in the special case where $K$ is the Euclidean unit ball $B^d$, the Minkowski shadow $S_K^M(G, u)$ becomes the standard orthogonal projection, and we recover the classical Euclidean Cauchy formula (Eq.~\eqref{eq:cauchy-euclidean}).

Holmes and Thompson (see \cite[Corollary 2.13]{HoT79}) also give a Cauchy-type formula for the surface area in Minkowski geometry. They average Euclidean orthogonal projection areas, with the parameter ranging over the boundary of the polar unit ball, $\bd \polar{K}$. By contrast, Lemma~\ref{lem:minkowski-cauchy} averages the area of the Minkowski shadow $S_K^M(G, u)$ over directions $u \in S^{d-1}$, and it arises as the scaling limit of the Funk Cauchy formula based on central shadows. From a computational perspective, our formula has a simpler sampling distribution and does not require computing polar bodies---only support points, projections, and scalings are needed.

%=======================================================================
\section*{Concluding Remarks and Open Problems}
%=======================================================================

We established an explicit Cauchy-type formula for the Holmes--Thompson surface area in the Funk geometry induced by a convex body $K \subset \RE^d$. This formula expresses the surface area of a convex body $G$ as the average, over directions $u\in S^{d-1}$, of the $(d-1)$-dimensional measures of the corresponding \emph{central shadows} of $G$ (Theorem~\ref{thm:cauchy-funk-gen}). For polytopal $K$, this identity admits a discrete vertex-based decomposition (Theorem~\ref{thm:cauchy-funk-polytope}), and the same framework yields a Crofton-type representation in terms of an explicit measure on the space of unoriented lines intersecting $G$ (Theorem~\ref{thm:crofton-unoriented}). 

Taken together, these results provide a concrete shadow-averaging principle for surface area in a projective Finsler setting. They also show that several classical formulas from Euclidean, Minkowski, Hilbert, and hyperbolic geometries can be understood within a single framework. From a computational perspective, our formulas involve Euclidean $(d-1)$-dimensional volumes of explicitly defined slices or projections, thereby avoiding direct evaluation of the Holmes--Thompson definitions through polars of pointwise Finsler balls. In the polytopal case, the resulting decomposition is especially simple, replacing more combinatorial Crofton descriptions based on pairs of faces by a sum over vertices and normal cones. 

Our work raises several natural open problems. A first direction is to develop provably efficient randomized algorithms for estimating Funk surface area from the Cauchy and Crofton formulas, ideally with explicit variance bounds and high-probability guarantees. For polytopal $K$, this includes efficient sampling from the spherical normal-cone decomposition and efficient evaluation or approximation of the associated central shadows. It would also be interesting to determine to what extent these formulas can serve as practical primitives for geometric computation in Funk- and Hilbert-type domains, in a manner analogous to the role of Cauchy and Crofton formulas in Euclidean stereology, tomography, and randomized surface area estimation. 

A second direction is to investigate whether this shadow-based approach extends beyond surface area to higher-order intrinsic quantities, such as the Holmes--Thompson analogs of quermassintegrals or curvature measures. A third is to better understand the relationship with Hilbert geometry, especially by identifying conditions on $K$ under which the Funk formula yields the exact Hilbert surface area beyond the known cases of $d=2$ and ellipsoids. More generally, it would be interesting to determine how geometric properties of $K$ govern the approximation gap between Hilbert and Funk surface areas.

%=======================================================================
% Bibliography
%=======================================================================

\pdfbookmark[1]{References}{s:ref}

%-----------------------------------------------------------------------
% Appendices
%-----------------------------------------------------------------------

\appendix

%=======================================================================
\section{Justification of the Cone Volume Definition} \label{s:just-cone-vol}
%=======================================================================

The definition of the Funk volume for cones (Eq.~\eqref{eq:funk-vol-cone-def}) relies on an integration over the cone's spherical cross-section. To justify this definition, we must show that it is consistent with the standard Holmes--Thompson volume defined on cross-sections. Specifically, the value should be invariant to the choice of the slicing hyperplane, subject to the condition that the intersection is bounded. 

We begin by establishing a generalized duality statement in Lemma~\ref{lem:slice-gen}. While the duality between sections of a convex body and projections of its polar is well-known (see Lemma~\ref{lem:slice}), our treatment of cones requires a corresponding result for possibly unbounded closed convex sets. To this end, we adapt the argument of~\cite[Theorem 2.2.9 and Corollary 2.2.10]{Tho96}. We begin with the following technical identity. Recall from Section~\ref{s:prelim} that, for a closed convex set $K$ containing the origin and a linear subspace $E$, the notation $\polarIn[E]{K}$ denotes the polar of $K$ taken within $E$.

%-----------------------------------------------------------------------
\begin{lemma} \label{lem:slice-aux}
For any set $K \subseteq \RE^d$ and any linear subspace $E \subset \RE^d$,
\[
    \polar{K} \cap E 
        ~ = ~ \polarIn[E]{(\orthproj{K}{E})}.
\]
\end{lemma}
%-----------------------------------------------------------------------

%-----------------------------------------------------------------------
\begin{proof}
Let $u \in E$. Since the projection is orthogonal, for any $x \in K$ we have $\inner{x}{u} = \inner{\big(\orthproj{x}{E}\big)}{u}$. Consequently, the support functions coincide on $E$:
\[
    h_K(u) 
        ~ = ~ \sup_{x \in K} \inner{x}{u} 
        ~ = ~ \sup_{x \in K} \inner{\orthproj{x}{E}}{u} 
        ~ = ~ h_{\orthprojsub{K}{E}}(u).
\]
By definition, a vector $u$ belongs to $\polar{K} \cap E$ if and only if $u \in E$ and $h_K(u) \leq 1$. Using the identity above, for $u \in E$, this is equivalent to $h_{\orthprojsub{K}{E}}(u) \leq 1$. This is precisely the definition of $u$ belonging to $\polarIn[E]{(\orthproj{K}{E})}$.
\end{proof}
%-----------------------------------------------------------------------

%-----------------------------------------------------------------------
\begin{lemma} \label{lem:slice-gen}
Let $K \subseteq \RE^d$ be a (possibly unbounded) closed convex set with the origin in its interior, and let $E$ be a linear subspace of $\RE^d$. Then the polar of the section $K \cap E$, taken relative to the subspace $E$, is the projection of the polar of $K$ onto $E$. That is,
\[
    \polarIn[E]{(K \cap E)} 
        ~ = ~ \orthproj{\polar{K}}{E}.
\]
\end{lemma}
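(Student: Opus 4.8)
The plan is to deduce Lemma~\ref{lem:slice-gen} from the auxiliary identity of Lemma~\ref{lem:slice-aux} by applying that identity to the polar set $\polar{K}$ and then invoking bipolarity twice — once in $\RE^d$ and once within the subspace $E$ — with the hypothesis $O \in \interior(K)$ supplying exactly the compactness needed to make the second bipolar step legitimate.

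First I would record the finiteness facts. Since $O \in \interior(K)$, there is $\rho > 0$ with $\rho B^d \subseteq K$, so $\polar{K} \subseteq (1/\rho) B^d$; hence $\polar{K}$ is compact, and therefore $\orthproj{\polar{K}}{E}$ is a \emph{compact} convex subset of $E$ containing the origin. In particular it is closed in $E$ — this is the delicate point, because orthogonal projections of unbounded closed convex sets need not be closed. Next, apply Lemma~\ref{lem:slice-aux} with the set $\polar{K}$ in place of $K$, obtaining
\[
    \polar{(\polar{K})} \cap E ~ = ~ \polarIn[E]{\big(\orthproj{\polar{K}}{E}\big)}.
\]
Since $K$ is closed and convex with $O \in K$, the bipolar theorem gives $\polar{(\polar{K})} = K$, so the left-hand side is $K \cap E$, i.e.
\[
    K \cap E ~ = ~ \polarIn[E]{\big(\orthproj{\polar{K}}{E}\big)}.
\]
Finally, take the polar within $E$ of both sides. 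On the right this yields $\polarIn[E]{\big(\polarIn[E]{(\orthproj{\polar{K}}{E})}\big)}$, and because $\orthproj{\polar{K}}{E}$ is a closed convex subset of $E$ containing the origin, the bipolar theorem applied inside the inner product space $E$ collapses this double polar back to $\orthproj{\polar{K}}{E}$ itself; on the left it yields $\polarIn[E]{(K \cap E)}$. Equating the two sides gives the claim.

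The only real obstacle is making sure both invocations of bipolarity are valid in this possibly unbounded setting. The first needs only that $K$ is closed, convex, and contains the origin, which is given (indeed $\closure(\conv(K \cup \{O\})) = K$ since $K$ is already closed, convex, and contains $O$). The second needs $\orthproj{\polar{K}}{E}$ to be closed in $E$, and this is precisely why one cannot weaken $O \in \interior(K)$: it is what forces $\polar{K}$, and hence its projection onto $E$, to be bounded, so that the projection is compact and the bipolar identity in $E$ applies. Everything else in the argument is a mechanical substitution.
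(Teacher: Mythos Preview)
Your proof is correct and follows essentially the same route as the paper: apply Lemma~\ref{lem:slice-aux} to $\polar{K}$, use bipolarity in $\RE^d$ to replace $\polar{(\polar{K})}$ by $K$, and then take polars within $E$ and invoke bipolarity there, justified by the compactness of $\orthproj{\polar{K}}{E}$ that follows from $O\in\interior(K)$. Your additional remark that closedness of the projection is the crux (and can fail for unbounded sets) is a nice clarification but does not alter the argument.
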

%-----------------------------------------------------------------------

%-----------------------------------------------------------------------
\begin{proof}
Applying Lemma~\ref{lem:slice-aux} to $\polar{K}$ gives
\[
    \polar{(\polar{K})} \cap E 
        ~ = ~ \polarIn[E]{(\orthproj{\polar{K}}{E})}.
\] 
Since $K$ is a closed convex set containing the origin, the Bipolar Theorem~\cite[Theorem 14.5]{Roc97} implies that  $\polar{(\polar{K})} = K$. Thus,
\[
    K \cap E 
        ~ = ~ \polarIn[E]{(\orthproj{\polar{K}}{E})}.
\]
We now take the polar in $E$ on both sides. For the left side, we obtain $\polarIn[E]{(K \cap E)}$. For the right side, let $G = \orthproj{\polar{K}}{E}$. Since $O \in \interior(K)$, the polar $\polar{K}$ is compact. The orthogonal projection of a compact set is compact, so $G$ is a compact convex subset of $E$ containing the origin. Applying the Bipolar Theorem within the subspace $E$, we have $\polarIn[E]{(\polarIn[E]{G})} = G$. Therefore  $\polarIn[E]{(K \cap E)} = \orthproj{\polar{K}}{E}$.
\end{proof}
%-----------------------------------------------------------------------

Recall from Section~\ref{s:cones} that, given a pointed cone $K$ and a point $x$ in $K$'s interior, the dual cross-section, $F_K(x)$, is defined to be the intersection of the polar cone $\polar{K}$ with the dual hyperplane associated with $-x$, that is, $F_K(x) = \polar{K} \cap -x^*$.

%-----------------------------------------------------------------------
\begin{lemma}\label{lem:cone-aux}
Let $K \subset\RE^d$ be a pointed cone and $x \in \interior(K)$. Let $H$ be any hyperplane such that the intersection $K \cap H$ is bounded. Then $\orthproj{\polar{(K-x)}}{E} = \orthproj{F_K(x)}{E}$, where $E$ is the linear subspace parallel to $H$.
\end{lemma}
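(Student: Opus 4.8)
The plan is to prove the two inclusions $\orthproj{F_K(x)}{E}\subseteq\orthproj{\polar{(K-x)}}{E}$ and the reverse, directly.

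First I would record an explicit description of $\polar{(K-x)}$. Since $K$ is a cone containing $O$, a scaling argument shows that $y$ satisfies $\inner{z-x}{y}\le1$ for all $z\in K$ if and only if $\inner{z}{y}\le0$ for all $z\in K$ (otherwise $\inner{\lambda z}{y}\to\infty$ as $\lambda\to\infty$) and, taking $z=O$ and using that $\sup_{z\in K}\inner{z}{y}=0$ once $y\in\polar{K}$, if and only if in addition $\inner{x}{y}\ge-1$. Thus
\[
  \polar{(K-x)} ~ = ~ \{\, y\in\polar{K} \ST \inner{x}{y}\ge-1 \,\},
\]
and $F_K(x)=\{y\in\polar{K}:\inner{x}{y}=-1\}$ is exactly the subset of $\polar{(K-x)}$ on which this inequality is tight; equivalently, this is Lemma~\ref{lem:polar-slice} applied to the body $K-x$ with $v=-x$. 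In particular $F_K(x)\subseteq\polar{(K-x)}$, which immediately gives the first inclusion.

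Next I would extract the geometric content of the boundedness hypothesis. Let $w$ be a unit normal to $H$, so that $E=w^{\perp}$. Because $K$ is a closed convex cone, the recession cone of $K\cap H$ equals $K\cap E$; since $K\cap H$ is bounded (and nonempty), $K\cap E=\{O\}$, and since $K$ is pointed this forces $K\setminus\{O\}$ to lie in one of the two open halfspaces bounded by $E$. Replacing $w$ by $-w$ if necessary (which does not change $E$), I may assume $\inner{z}{w}\ge0$ for all $z\in K$, with strict inequality for $z\ne O$; in particular $\inner{x}{w}>0$, using $x\in\interior(K)$ and $x\ne O$ (the apex of a pointed cone is a boundary point). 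With this in hand, the reverse inclusion is a one-line construction: given $y\in\polar{(K-x)}$, set $s:=-(1+\inner{x}{y})/\inner{x}{w}$, which is $\le0$ since $1+\inner{x}{y}\ge0$ and $\inner{x}{w}>0$. Then $y':=y+sw$ satisfies $\inner{x}{y'}=-1$, and for every $z\in K$ one has $\inner{z}{y'}=\inner{z}{y}+s\inner{z}{w}\le0$ because $\inner{z}{y}\le0$, $\inner{z}{w}\ge0$, and $s\le0$; hence $y'\in F_K(x)$. Since $w\perp E$, projecting onto $E$ kills $sw$, so $\orthproj{y}{E}=\orthproj{y'}{E}\in\orthproj{F_K(x)}{E}$, completing the reverse inclusion and hence the proof.

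The hard part is the middle step, i.e.\ turning ``$K\cap H$ is bounded'' into the working statement that $K$ lies, apart from its apex, strictly on one side of the direction space $E$. That relies on $K$ being a closed pointed convex cone together with the standard fact that a nonempty closed convex set is bounded exactly when its recession cone is trivial. (Alternatively one could route the computation of $\orthproj{\polar{(K-x)}}{E}$ through the generalized projection--section duality of Lemma~\ref{lem:slice-gen}, but the direct verification above seems cleaner.) Everything else is elementary algebra with inner products.
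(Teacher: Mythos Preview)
Your proof is correct and proceeds along the same geometric lines as the paper's: both arguments first characterize $\polar{(K-x)}$ in terms of $\polar{K}$ and the slice $F_K(x)$, and both extract from the boundedness of $K\cap H$ the key fact that the normal direction $w$ to $E$ lies (up to sign) in $\interior(\polar{K})$, equivalently that $\inner{z}{w}>0$ for every nonzero $z\in K$. The only difference is in the finishing move for the nontrivial inclusion: the paper writes $\polar{(K-x)}=\conv(O,F_K(x))$ and observes that the line $\RE w$ meets $F_K(x)$, so $O\in\orthproj{F_K(x)}{E}$ and the projection of the convex hull collapses; you instead slide each $y\in\polar{(K-x)}$ along $w$ by the explicit amount $s=-(1+\inner{x}{y})/\inner{x}{w}$ to land in $F_K(x)$ with the same $E$-projection. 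These are two equally short realizations of the same idea. One small remark: you invoke nonemptiness of $K\cap H$ for the recession-cone step, which the lemma does not state explicitly; this is harmless since the only use of the lemma (in Lemma~\ref{lem:cone-vol}) has $K\cap H$ a convex body, and in any case your construction only needs $\inner{z}{w}\ge 0$ on $K$ together with $\inner{x}{w}>0$, the latter following already from $x\in\interior(K)$.
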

%-----------------------------------------------------------------------

%-----------------------------------------------------------------------
\begin{proof}
Recall from Eq.~\eqref{eq:F-def} that $F_K(x)$ is the intersection of $\polar{K}$ with the hyperplane $-x^*$. Since $x \in \interior(K)$, $F_K(x)$ is bounded and $\polar{(K-x)} = \conv(O,F_K(x))$. Let $u$ be a normal vector to $H$. The boundedness of $K \cap H$ implies that the line spanned by $u$ passes through $\interior(\polar{K})$, and therefore intersects $F_K(x)$. Thus, the origin is contained in $\orthproj{F_K(x)}{E}$, which implies that the projection of the convex hull reduces to the projection of $F_K(x)$.
\end{proof}
%-----------------------------------------------------------------------

We now present the main result of this section. Recall that, given a cone $K$, a hyperplane $H$ is \emph{admissible} for $K$ if $K \cap H$ is bounded.

%-----------------------------------------------------------------------
\begin{lemma} \label{lem:cone-vol}
Let $K, G \subset \RE^d$ be two pointed cones, such that $(G \setminus \{O\}) \subset \interior(K)$. Let $H$ be any hyperplane that is admissible for $K$. Then 
\[
    \volFunk[K](G) ~ = ~ \volFunk[K \cap H](G \cap H).
\]
\end{lemma}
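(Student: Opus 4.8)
The plan is to unpack the right-hand side using the standard Holmes--Thompson definition in the $(d-1)$-dimensional affine space $H$, convert the polar of the cross-section $(K\cap H)-x$ into a projection of the dual cross-section $F_K(x)$, and then match the resulting integral over $G\cap H$ with the spherical integral defining $\volFunk[K](G)$ in Eq.~\eqref{eq:funk-vol-cone-def}, via a central-projection change of variables.

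\emph{Step 1: from cross-section polars to $F_K$.} Let $E$ denote the linear subspace parallel to $H$. First I would observe that $G\cap H$ is a convex body contained in the bounded convex body $K\cap H$ (boundedness of $G\cap H$ is inherited from admissibility of $H$), so the Funk volume of $G\cap H$ in the ambient body $K\cap H$, read in the affine hyperplane $H$, is $\volFunk[K\cap H](G\cap H)=\frac{1}{\omega_{d-1}}\int_{G\cap H}\lambda_{d-1}\big(\polarIn[E]{((K\cap H)-x)}\big)\,d\lambda_{d-1}(x)$. Now fix $x\in G\cap H$; since $x\neq O$ and $G\setminus\{O\}\subset\interior(K)$ we have $x\in\interior(K)$, hence $O\in\interior(K-x)$ and $(K-x)\cap E=(K\cap H)-x$. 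The generalized projection--section duality (Lemma~\ref{lem:slice-gen}), applied to the closed convex set $K-x$ and the subspace $E$, then gives $\polarIn[E]{((K\cap H)-x)}=\orthproj{\polar{(K-x)}}{E}$, and Lemma~\ref{lem:cone-aux} (which applies because $H$ is admissible for $K$) upgrades this to $\orthproj{\polar{(K-x)}}{E}=\orthproj{F_K(x)}{E}$. Thus $\volFunk[K\cap H](G\cap H)=\frac{1}{\omega_{d-1}}\int_{G\cap H}\lambda_{d-1}\big(\orthproj{F_K(x)}{E}\big)\,d\lambda_{d-1}(x)$.

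\emph{Step 2: central projection and its Jacobian.} Since $K\cap H$ is bounded and $K$ is a pointed cone with apex $O$, the hyperplane $H$ avoids $O$; orienting its unit normal $u$ suitably I may write $H=\{y:\inner{y}{u}=c\}$ with $c>0$, and admissibility forces $\inner{w}{u}>0$ for every $w\in K\setminus\{O\}$ (otherwise suitable positive scalings of points of $K$ would escape to infinity inside $K\cap H$). In particular $\inner{s}{u}>0$ for all $s\in\Sigma(G)=S^{d-1}\cap G$, and the radial map $\rho(s):=(c/\inner{s}{u})\,s$ is a bijection $\Sigma(G)\to G\cap H$. Writing $\rho=c\,\mathcal{P}_u$ (the gnomonic projection scaled by $c$, since $H=c\,u^*$), Proposition~\ref{prop:gnomonic} together with the fact that scaling $u^*$ about $O$ by $c$ multiplies $(d-1)$-volume by $c^{d-1}$ gives the Jacobian $J_\rho(s)=c^{d-1}\,(\inner{s}{u})^{-d}$.

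\emph{Step 3: matching integrands and cancellation.} Because $(\alpha y)^*=\alpha^{-1}y^*$ and $\polar{K}$ is a cone, $F_K$ is positively homogeneous of degree $-1$, so $\|\rho(s)\|=c/\inner{s}{u}$ and Eq.~\eqref{eq:scaling} yield $\lambda_{d-1}(F_K(\rho(s)))=(\inner{s}{u}/c)^{d-1}\,\lambda_{d-1}(F_K(s))$. The set $F_K(\rho(s))$ lies in a hyperplane with unit normal $s$, and orthogonal projection of such a hyperplane onto $E=u^\perp$ scales $(d-1)$-volume by $|\inner{s}{u}|=\inner{s}{u}$, so $\lambda_{d-1}\big(\orthproj{F_K(\rho(s))}{E}\big)=(\inner{s}{u})^{d}\,c^{-(d-1)}\,\lambda_{d-1}(F_K(s))$. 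Substituting into the integral of Step~1 and changing variables $x=\rho(s)$, the factor $(\inner{s}{u})^{d}\,c^{-(d-1)}$ cancels exactly against $J_\rho(s)=c^{d-1}(\inner{s}{u})^{-d}$, leaving $\volFunk[K\cap H](G\cap H)=\frac{1}{\omega_{d-1}}\int_{\Sigma(G)}\lambda_{d-1}(F_K(s))\,d\sigma(s)=\volFunk[K](G)$, which is exactly Eq.~\eqref{eq:funk-vol-cone-def}.

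The Jacobian computations are routine (the gnomonic factor is already furnished by Proposition~\ref{prop:gnomonic}, and the cosine tilt factor for projecting one hyperplane onto another is standard and used elsewhere in the paper). The step demanding genuine care is the first one: passing from the polar of the \emph{affine} cross-section $(K\cap H)-x$ to the projection of $\polar{(K-x)}$ requires the duality in the unbounded setting (Lemma~\ref{lem:slice-gen}) rather than the elementary bounded version, and I must also verify the global geometric facts that admissibility of $H$ makes $G\cap H$ a convex body, confines $G\setminus\{O\}$ to an open halfspace through $O$, and makes the radial map $\rho$ a genuine bijection of $\Sigma(G)$ onto $G\cap H$ with the stated Jacobian.
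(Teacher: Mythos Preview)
Your proof is correct and follows essentially the same approach as the paper's: both rewrite $\volFunk[K\cap H](G\cap H)$ via Lemmas~\ref{lem:slice-gen} and~\ref{lem:cone-aux} as $\frac{1}{\omega_{d-1}}\int_{G\cap H}\lambda_{d-1}(\orthproj{F_K(x)}{E})\,d\lambda_{d-1}(x)$, and then match this with the spherical integral defining $\volFunk[K](G)$ by a radial change of variables whose Jacobian, homogeneity of $F_K$, and cosine tilt factor cancel exactly. The only cosmetic differences are that you parameterize the change of variables in the direction $\Sigma(G)\to G\cap H$ (and cite Proposition~\ref{prop:gnomonic} for the Jacobian), whereas the paper goes $G\cap H\to\Sigma(G)$ and writes the surface-measure transformation directly.
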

%-----------------------------------------------------------------------

%-----------------------------------------------------------------------
\begin{proof}
Recall by definition, 
\[
    \volFunk[K](G)
    ~ = ~ \frac{1}{\omega_{d-1}} \int_{\Sigma(G)} \lambda_{d-1}(F_K(s)) \, d\sigma(s),                  \quad\text{where $\Sigma(G) = S^{d-1} \cap G$.}
\]
Let $H$ be an admissible hyperplane with unit normal $u$, and let $E$ be its parallel subspace (see Figure~\ref{f:cone-5}(a)). Let $K_H = K \cap H$ and $G_H = G \cap H$. By the definition of $\volFunk[K_H](G_H)$ and Lemmas~\ref{lem:slice-gen} and~\ref{lem:cone-aux}, we have
\begin{align}
    \volFunk[K_H](G_H) 
    & ~ = ~ \frac{1}{\omega_{d-1}} \int_{G_H} \lambda_{d-1}(\polarIn[E]{(K_H-x)}) \, d\lambda_{d-1}(x) \nonumber\\
    & ~ = ~ \frac{1}{\omega_{d-1}} \int_{G_H} \lambda_{d-1}(\orthproj{F_K(x)}{E}) \, d\lambda_{d-1}(x). \label{eq:def-vol-section}
\end{align}

%-----------------------------------------------------------------------
\begin{figure}[htbp]
  \centerline{\includegraphics[scale=0.40,page=5]{Figs/cone}}
  \caption{Proof of Lemma~\ref{lem:cone-vol}.}
  \label{f:cone-5}
\end{figure}
%-----------------------------------------------------------------------

Consider the radial map $x \mapsto s := x / \|x\|$ from $G_H$ to $\Sigma(G)$ (see Figure~\ref{f:cone-5}(b)). Since $K$ is pointed and $O \notin H$, this is a bijection. The transformation for surface measures is
\begin{equation} \label{eq:measure-transform-H}
    d\sigma(s)
        ~ = ~ \frac{|\inner{s}{u}|}{\|x\|^{d-1}} \, d\lambda_{d-1}(x).
\end{equation}
As in the proof of Lemma~\ref{lem:cone-boundary-integral}, homogeneity implies that
\begin{equation} \label{eq:scaling-H}
    \lambda_{d-1}(F_K(s)) 
        ~ = ~ \|x\|^{d-1} \, \lambda_{d-1}(F_K(x)).
\end{equation}
Further, $F_K(x)$ lies on a hyperplane whose unit normal is $s$, and $E$ has unit normal $u$, so 
\begin{equation} \label{eq:projection-H}
    \lambda_{d-1} (\orthproj{F_K(x)}{E}) 
        ~ = ~ |\!\inner{s}{u}\!| \, \lambda_{d-1}(F_K(x)).
\end{equation}
Combining Eqs.~\eqref{eq:measure-transform-H}--\eqref{eq:projection-H} gives
\[
    \lambda_{d-1}(\orthproj{F_K(x)}{E}) \, d \lambda_{d-1}(x) 
        ~ = ~ \lambda_{d-1}(F_K(s)) \, d \sigma(s).
\]
Since the radial map is a bijection, integrating both sides yields the result.
\end{proof}
%-----------------------------------------------------------------------

%=======================================================================
\section{A Direct Geometric Proof of Funk Area Duality} \label{s:polarity-conservation}
%=======================================================================

In this appendix, we provide a geometric proof of the polarity invariance of Funk surface area. This duality was established by Faifman~\cite{Fai24} by symplectic methods. An alternative elementary proof was given in Appendix~C of~\cite{ArM26Darxiv}. Unlike that earlier elementary proof, which establishes the polarity identity directly, the proof presented here derives it from the vertex decomposition of Section~\ref{s:vertex-decomp-ptope} together with Faifman's duality for Funk volume on convex bodies. In the polytopal case, this approach also yields a facet-based realization of the duality.

%-----------------------------------------------------------------------
\begin{lemma}[Funk Area Duality] \label{lem:area-duality}
Let $K$ be a convex body in $\RE^d$ containing the origin in its interior, and let $G \subset \interior(K)$ be a convex body with $O \in \interior(G)$. Then
\[
    \areaFunk[K](G) ~ = ~ \areaFunk[\polar{G}](\polar{K}).
\]
\end{lemma}
%-----------------------------------------------------------------------

The proof is based on the following cone-volume duality statement.

%-----------------------------------------------------------------------
\begin{lemma}[Cone Duality] \label{lem:cone-duality}
Let $K \subset \RE^d$ be a pointed cone with apex at the origin, and let $G \subset \interior(K)$ be a pointed subcone. Then
\[
    \volFunk[K](G) ~ = ~ \volFunk[\polar{G}](\polar{K}).
\]
\end{lemma}
%-----------------------------------------------------------------------

%-----------------------------------------------------------------------
\begin{proof} 
After an invertible linear change of coordinates, we may assume that $K, G \subset \{x \in \RE^d \ST x_d < 0\}$, and further, $-e_d \in \interior(G)$, where $e_d$ denotes the $d$th coordinate unit vector. Set 
\[
    H_K ~ := ~ \{x' \in \RE^{d-1} \ST (x',-1) \in K\}
        \quad\text{and}\quad
    H_G ~ := ~ \{x' \in \RE^{d-1} \ST (x',-1) \in G\}
\]
(see Figure~\ref{f:cone-duality}).

%-----------------------------------------------------------------------
\begin{figure}[htbp]
  \centerline{\includegraphics[scale=0.40]{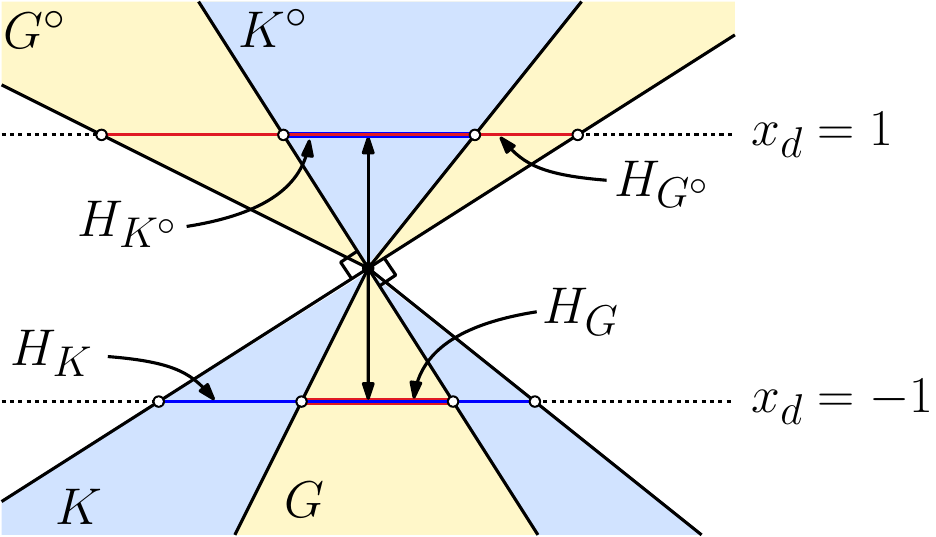}}
  \caption{Proof of Lemma~\ref{lem:cone-duality}. (The bodies $H_K$, $H_G$, $H_{\polar{K}}$, and $H_{\polar G}$ have been translated to hyperplanes $x_d = 1$ and $x_d = -1$.)}
  \label{f:cone-duality}
\end{figure}
%-----------------------------------------------------------------------

Clearly, $H_K$ and $H_G$ are convex bodies in $\RE^{d-1}$ with $H_G \subset \interior(H_K)$. By Lemma~\ref{lem:cone-vol}, $\volFunk[K](G) = \volFunk[H_K](H_G)$. Similarly, define 
\[
    H_{\polar K} ~ := ~ \{y' \in \RE^{d-1} \ST (y',1) \in \polar{K}\}
    \qquad\text{and}\qquad
    H_{\polar G} ~ := ~ \{y' \in \RE^{d-1} \ST (y',1) \in \polar{G}\}.
\]
For any $y' \in \RE^{d-1}$, we have
\begin{align*}
    y' \in H_{\polar K}
            & \iff \inner{(x',-1)}{(y',1)} \leq 0, \quad \text{for all $x' \in H_K$} \\
            & \iff \inner{x'}{y'} \leq 1, \quad \text{for all $x' \in H_K$} \\
            & \iff y' \in \polar{H_K}.
\end{align*}
Hence $H_{\polar K} = \polar{H_K}$, and similarly $H_{\polar G} = \polar{H_G}$. Applying Faifman's polarity duality for Funk volume on convex bodies~\cite{Fai24}, we obtain 
\[
    \volFunk[H_K](H_G)
        ~ = ~ \volFunk[\polar{H_G}](\polar{H_K})
        ~ = ~ \volFunk[H_{\polar G}](H_{\polar K}).
\]
A second application of Lemma~\ref{lem:cone-vol} gives 
\[
    \volFunk[H_{\polar G}](H_{\polar K})
        ~ = ~ \volFunk[\polar{G}](\polar{K}),
\]
and the proof follows.
\end{proof}
%-----------------------------------------------------------------------

%-----------------------------------------------------------------------
\begin{proof}[Proof of Lemma~\ref{lem:area-duality}] 
We first consider the case where $K$ is a convex polytope. Recall that $V(K)$ denotes the vertex set of $K$. By Theorem~\ref{thm:cauchy-funk-polytope}, we have
\[
    \areaFunk[K](G)
        ~ = ~ \sum\nolimits_{v \in V(K)} \volFunk[K_v](G_v),
\]
where $K_v := \cone(K-v)$ and $G_v := \cone(G-v)$. For each $v \in V(K)$, Lemma~\ref{lem:cone-duality} gives 
\[
    \volFunk[K_v](G_v)
        ~ = ~ \volFunk[\polar{G_v}](\polar{K_v}).
\]
Fix $v \in V(K)$ (see Figure~\ref{f:area-duality}(a)). Slice both cones $\polar{K_v}$ and $\polar{G_v}$ by the dual hyperplane $v^*$ defined in Eq.~\eqref{eq:dual-hyperplane} (see Figure~\ref{f:area-duality}(b)). Recall that $F_v$ denotes the facet of $\polar{K}$ dual to $v$. By Corollary~\ref{cor:polar-facets}, we have $\polar{K_v} \cap v^* = F_v$.

%-----------------------------------------------------------------------
\begin{figure}[htbp]
  \centerline{\includegraphics[scale=0.40]{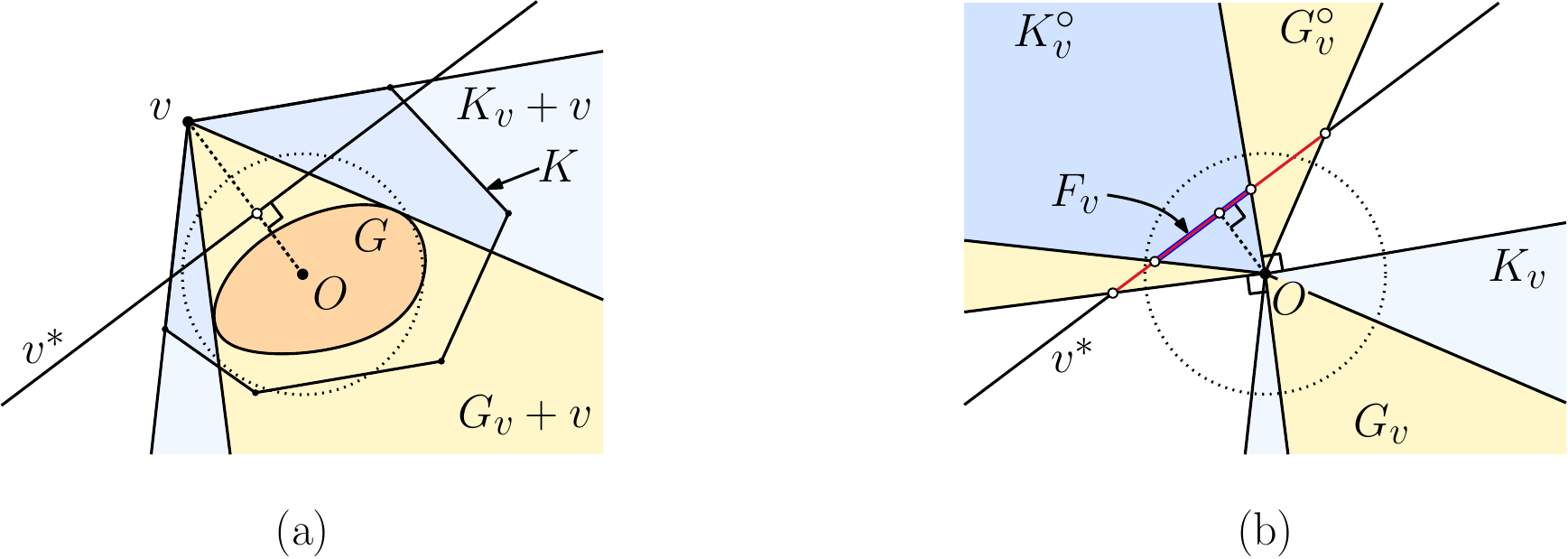}}
  \caption{Proof of Lemma~\ref{lem:area-duality}.}
  \label{f:area-duality}
\end{figure}
%-----------------------------------------------------------------------s

Since $v \notin \interior(G)$, Lemma~\ref{lem:polar-slice} applied to $Q=G$ yields
\[
    \polar{G_v} \cap v^* ~ = ~ \polar{G} \cap v^*.
\]
Because $\polar{G}$ is compact, the section $\polar{G}\cap v^*$ is bounded, so $v^*$ is admissible for $\polar{G_v}$. Therefore, by Lemma~\ref{lem:cone-vol}, 
\[
    \volFunk[\polar{G_v}](\polar{K_v})
        ~ = ~ \volFunk[\polar{G} \cap v^*](F_v).
\]
Identifying the hyperplane $v^*$ with the linear space $v^\perp$ by translation, the right-hand side is exactly the Funk surface-area contribution of the facet $F_v$ to $\areaFunk[\polar{G}](\polar{K})$. Thus,
\[
    \volFunk[\polar{G}\cap v^*](F_v)
        ~ = ~ \areaFunk[\polar{G}](F_v).
\]
Combining the preceding identities, we obtain $\volFunk[K_v](G_v) = \areaFunk[\polar{G}](F_v)$. Summing over all $v \in V(K)$, and using that the facets $F_v$ cover $\bd \polar{K}$ up to lower-dimensional overlaps, we obtain
\[
    \areaFunk[K](G)
        ~ = ~ \sum_{v \in V(K)} \areaFunk[\polar{G}](F_v)
        ~ = ~ \areaFunk[\polar{G}](\polar{K}).
\]
This proves the lemma in the polytopal case. 

The general case follows by choosing a sequence of polytopes $K_n \to K$ in the Hausdorff metric. Since $G \subset \interior K$, after discarding finitely many terms we may assume that $G \subset \interior(K_n)$ for all $n$, and then pass to the limit using continuity of polarity and of the Holmes--Thompson area functional.
\end{proof} 
%-----------------------------------------------------------------------

\end{document}